\definecolor{darkred}{rgb}{.7,0,0}
\definecolor{darkgreen}{rgb}{.15,.55,0}
\definecolor{darkblue}{rgb}{0,0,0.7}
\newcommand{\ExpLoss}{\overline{\mathcal{L}}}
\newcommand{\Var}{\mathbb{V}\text{ar}}
\newcommand{\Cov}{\mathbb{C}\text{ov}}
\newcommand{\CA}{\mathcal{A}}
\newcommand{\CL}{\mathcal{L}}
\newcommand{\EE}{\mathbb{E}}
\newcommand{\RR}{\mathbb{R}}
\colorlet{colorp}{NavyBlue}
\colorlet{colorT}{WildStrawberry}
\colorlet{colork}{OliveGreen}
\colorlet{colorM}{RoyalPurple}
\colorlet{colorNb}{Plum!80}
\colorlet{colorIs}{black}
\colorlet{customrgbgreen}{darkgreen}
\colorlet{colorY}{Dandelion}
\setlist[enumerate]{leftmargin=.5in}
\setlist[itemize]{leftmargin=.5in}
\newtheorem{theorem}{Theorem}[section]
\newtheorem{lemma}[theorem]{Lemma}
\title{Bayesian Covariance Uncertainty for Adaptive Pilot-Sampling Termination in Multi-fidelity Uncertainty Quantification}
\date{}
\newif\ifuniqueAffiliation
\author{ \href{https://orcid.org/0000-0001-6764-9275}{\includegraphics[scale=0.06]{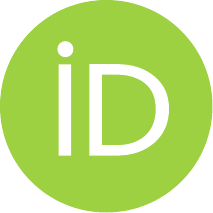}\hspace{1mm}Thomas E. Coons}\\
	Department of Mechanical Engineering\\
	University of Michigan\\
	Ann Arbor, MI 48109 \\
	\texttt{tcoons@umich.edu} \\
	\And
	\href{https://orcid.org/0000-0003-3379-7514}{\includegraphics[scale=0.06]{orcid.pdf}\hspace{1mm}Aniket Jivani} \\
        Department of Mechanical Engineering\\
	University of Michigan\\
	Ann Arbor, MI 48109 \\
    \texttt{ajivani@umich.edu}
    \And
	\href{https://orcid.org/0000-0001-6544-2764}{\includegraphics[scale=0.06]{orcid.pdf}\hspace{1mm}Xun Huan} \\
    Department of Mechanical Engineering\\
	University of Michigan\\
	Ann Arbor, MI 48109 \\
    \texttt{xhuan@umich.edu}
}
\DeclareMathOperator{\diag}{diag}
\begin{document}
\maketitle
\begin{abstract}
    Monte Carlo integration becomes prohibitively expensive when each sample requires a high-fidelity model evaluation. 
    Multi-fidelity uncertainty quantification methods mitigate this by combining estimators from high- and low-fidelity models, preserving unbiasedness while reducing variance under a fixed budget. 
    Constructing such estimators optimally requires the model-output covariance matrix, typically estimated from pilot samples. Too few pilot samples lead to inaccurate covariance estimates and suboptimal estimators, while too many consume budget that could be used for final estimation. 
    We propose a Bayesian framework to quantify covariance uncertainty from pilot samples, incorporating prior knowledge and enabling probabilistic assessments of estimator performance.
    A central component is a flexible $\gamma$-Gaussian prior that ensures computational tractability and supports efficient posterior projection under additional pilot samples. These tools enable adaptive pilot-sampling termination via an interpretable loss criterion that decomposes variance inefficiency into accuracy and cost components.
    While demonstrated here in the context of approximate control variates (ACV), the framework generalizes to other multi-fidelity estimators. We validate the approach on a monomial benchmark and a PDE-based Darcy flow problem.
    Across these tests, our adaptive method demonstrates its value for multi-fidelity estimation under limited pilot budgets and expensive models,
    achieving variance reduction comparable to baseline estimators with oracle covariance.
\end{abstract}

\keywords{Monte Carlo \and approximate control variates \and variance reduction \and surrogate modeling}

\section{Introduction}

Estimating the expected value of a function is a central task in uncertainty quantification (UQ). While alternatives such as quadrature and Quasi Monte Carlo exist, Monte Carlo (MC) integration remains the standard approach, especially in high-dimensional settings. However, MC can become prohibitively expensive when each sample requires a high-fidelity model evaluation (e.g., solving a partial differential equation (PDE) on fine discretizations). Replacing the high-fidelity model with a surrogate model can reduce cost but typically introduces bias. Multi-fidelity UQ methods provide a more powerful alternative: by intelligently {combining} estimators from high- and low-fidelity models, they preserve unbiasedness while reducing variance under a fixed computational budget.

A variety of multi-fidelity estimators have proven effective, including multilevel Monte Carlo (MLMC)~\cite{Giles2008}, multi-fidelity Monte Carlo (MFMC)~\cite{peherstorfer_optimal_2016}, multilievel best linear unbiased estimators (MLBLUE) \cite{schaden_multilevel_2020}, approximate control variates (ACV) \cite{Gorodetsky2020, bomarito_optimization_2022}, 
and 
grouped ACV (GACV) 
\cite{gorodetsky_grouped_2024}.
All of these sampling-based estimators are parameterized by weights and a sample allocation,
obtained by solving an optimization problem for a given budget. 
Crucially, this optimization requires the covariance matrix of the model outputs, which is generally unknown \textit{a priori}. 

The standard remedy is \textit{pilot sampling}: evaluate all models on an independent set of inputs, estimate the covariance empirically, then optimize the weights and sample allocation. Pilot sampling, however, introduces a fundamental trade-off: 
\begin{itemize}
\item Taking too few samples results in inaccurate covariance estimates, which in turn leads to suboptimal weights and sample allocation.
\item Taking too many samples produces accurate covariance estimates, but leaves insufficient budget for the final estimation.
\end{itemize}
For example, the left panel of Figure~\ref{fig:pilot_study} illustrates an optimal region of estimator variance versus pilot budget (out of a fixed total budget equivalent to 200 pilot samples). In addition, the predicted estimator error using too few pilot samples can underestimate the true estimator variance by orders of magnitude, creating overconfidence in estimator quality. For example, the right panel of Figure~\ref{fig:pilot_study} highlights this effect.
Despite their importance, these issues are rarely examined, and no general framework exists for quantifying covariance uncertainty for multi-fidelity UQ. 
To our knowledge, the only existing published work that proposed a solution to this trade-off is~\cite{xu_bandit-learning_2022}, which relies on an asymptotic analysis of a non-optimal estimator (see Section~\ref{sec:AETC}). 
Alternative strategies include exchanging pilot sampling for ensemble estimation~\cite{pham_ensemble_2022} while ignoring covariance uncertainty, or fixing weights at suboptimal values that avoids dependence on covariance information~\cite{chakroborty_covariance-free_2024}.
In contrast, we directly incorporate covariance uncertainty, enabling a flexible and general approach to adaptive pilot-sampling termination in finite-sample regimes.

\begin{figure}[htbp]
\centering
  \centering
  \includegraphics[width=.48\textwidth]{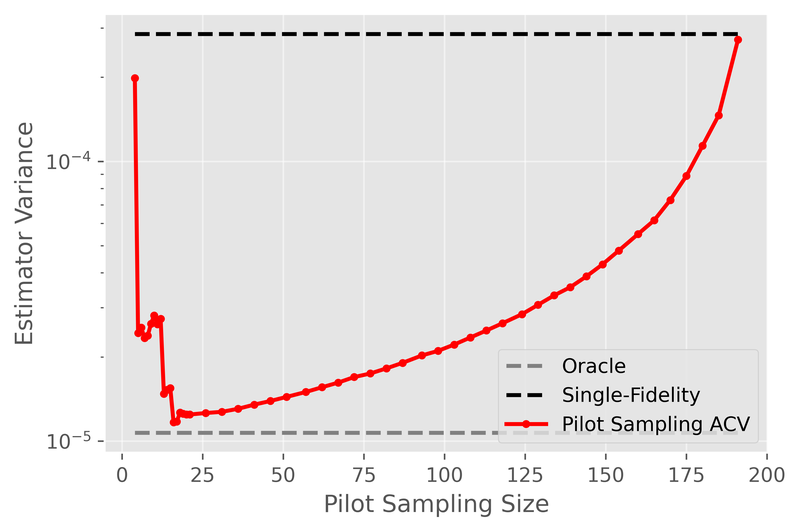}
  \includegraphics[width=.48\textwidth]{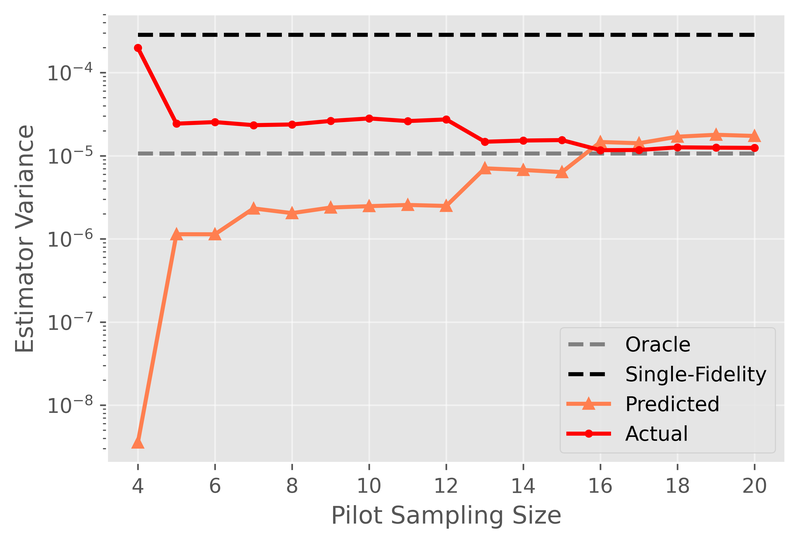}
\caption{
    Example (see Section~\ref{sec:monomial} for details) illustrating ACV estimator variance (left) and variance-prediction error (right) across different pilot-sample sizes. The black dashed line indicates the variance of the single-fidelity MC estimator, while the gray dashed line indicates the ACV variance under exact covariance knowledge (oracle case).
}
\label{fig:pilot_study}
\end{figure}

We address two key questions: 
\begin{enumerate}
\item How can we quantify uncertainty in covariance estimates obtained from pilot samples? 
\item How can we adaptively decide when to terminate pilot sampling under a given budget?
\end{enumerate}
Our approach is Bayesian. We treat the model-output covariance as an uncertain parameter and perform Bayesian inference conditioned on pilot data to obtain a posterior over covariance matrices. 
At each pilot iteration, we use the posterior to evaluate an expected-loss criterion that measures the inefficiency from additional pilot samples. We terminate pilot sampling once further pilot samples are projected to decrease efficiency. After termination, the posterior also provides a probabilistic assessment of estimator error, in contrast to point predictions common in prior work.

The main contributions of this work are:
\begin{itemize}
\setlength
    \item We propose a Bayesian approach to quantify the uncertainty in covariance estimation from pilot samples, enabling the incorporation of prior knowledge of the model covariance and a probabilistic assessment of estimator performance.
    \item We develop a $\gamma$-Gaussian prior for correlation matrices (extending the initial formulation of \cite{archakov_new_2021}) and design efficient methods to project posteriors under additional pilot samples, applicable to both this prior and the inverse Wishart distribution. This capability is essential for incorporating covariance uncertainty into adaptive pilot-sampling termination. 
    \item We introduce an interpretable loss function that decomposes into accuracy and cost components, %
    and use it to guide adaptive pilot-sampling termination.
    \item We validate the method on a monomial benchmark problem and a PDE-based Darcy flow application.
\end{itemize}
These contributions provide a principled and general framework for adaptive pilot sampling across a wide range of multi-fidelity estimation settings.
While this paper focuses on ACV for concreteness, our developments can also be applied to other estimators such as MLBLUE and GACV.
Our code is available at: \url{https://github.com/tcoonsUM/bayes_mfuq}.

The remainder of this paper is organized as follows. Section~\ref{sec:background} reviews ACV estimators and related work. Section~\ref{sec:bayes_inf} introduces Bayesian inference for covariance matrices, including the new $\gamma$-Gaussian prior. Section~\ref{sec:adaptive_pilot_sampling} presents the expected-loss criterion and the adaptive termination algorithm. Section~\ref{sec:demos} demonstrates the method on benchmark and PDE-based problems. Section~\ref{sec:conclusions}
 concludes.

\section{Background and related work}\label{sec:background}
In this section, we provide background on multi-fidelity estimation using ACV estimators and review related work on balancing pilot sampling and estimation costs.

\subsection{Approximate control variate estimators}\label{sec:ACVs}

Consider a mapping $\boldsymbol{Q}=f_0(\boldsymbol{Z})$, where $\boldsymbol{Z} \in \mathbb{R}^{n_{z}}$ is a random vector and $\boldsymbol{Q} \in \mathbb{R}$ is a scalar random variable.\footnote{We use boldface to denote random quantities, and non-bold symbols to denote their realizations and other non-random quantities.} The expected value $\mathbb{E}[\boldsymbol{Q}]$ is typically estimated using the standard MC estimator:
\begin{align}\label{eqn:mc}
    \mathbb{E}\left[\boldsymbol{Q}\right] 
    \approx \boldsymbol{\hat{Q}}_0(\boldsymbol{z}) = \frac{1}{N} \sum^{N}_{j=1}f_0(\boldsymbol{Z}^{(j)}),
\end{align}
where $\boldsymbol{z}=[\boldsymbol{Z}^{(1)},\ldots,\boldsymbol{Z}^{(N)}]$ is the set of independent and identically distributed (i.i.d.) random samples following $p(Z)$.\footnote{Formally, the probability density is written as $p_{\boldsymbol{Z}}(\boldsymbol{Z}=Z)$, but for brevity we write $p(Z)$ or $p_{\boldsymbol{Z}}$.
} This estimator is unbiased, and its error is governed entirely by its variance: $\Var[\boldsymbol{\hat{Q}}_0] = N^{-1}\Var[\boldsymbol{Q}]$. When $f_0$ is computationally expensive, the available budget may not allow a sufficiently large $N$ to reduce this variance to acceptable levels.

To mitigate this difficulty, multi-fidelity methods leverage an ensemble of lower-fidelity models $f_m(\boldsymbol{Z})$ that are correlated with $f_0(\boldsymbol{Z})$ but less expensive to evaluate. The ACV method \cite{Gorodetsky2020, bomarito_optimization_2022} introduces $M-1$ such auxiliary random variables $\boldsymbol{Q}_{m} =f_{m}(\boldsymbol{Z})$ 
and constructs a combined estimator:
\begin{align}
        \boldsymbol{\Tilde{Q}}(\boldsymbol{z};\alpha,\CA) &= 
        \boldsymbol{\hat{Q}}_{0}(\boldsymbol{z}_{0})+\sum^{M-1}_{m=1}\alpha_{m}\left( \boldsymbol{\hat{Q}}_{m}(\boldsymbol{z}^{\ast}_{m})-\boldsymbol{\hat{Q}}_{m}(\boldsymbol{z}_{m}) \right),     \label{eqn:ACV-Formula}
\end{align}
where $\alpha = \left[ \alpha_{1}, \ldots, \alpha_{M-1} \right]$ are control variate weights, 
and the input samples $\boldsymbol{z}$ are partitioned into subsets $\boldsymbol{z}_{0}$, $\boldsymbol{z}_{m}$, and $\boldsymbol{z}^{\ast}_{m}$ according to a sample allocation $\CA$. 
Like standard MC, the ACV estimator is unbiased, so its error is entirely determined by its variance.

Using vectorized notation, we define $\boldsymbol{\Delta}=\left[ \boldsymbol{\Delta}_{1}(\boldsymbol{z}_{1}^{\ast},\boldsymbol{z}_{1}), \ldots,  \boldsymbol{\Delta}_{M}(\boldsymbol{z}_{M}^{\ast},\boldsymbol{z}_{M})\right]$, where each difference term is $\boldsymbol{\Delta}_{m}(\boldsymbol{z}_{m}^{\ast},\boldsymbol{z}_{m})=\boldsymbol{\hat{Q}}_{m}(\boldsymbol{z}^{\ast}_{m})-\boldsymbol{\hat{Q}}_{m}(\boldsymbol{z}_{m})$. Using this notation, the ACV estimator can be written compactly as:
\begin{align}\label{eqn:vectorized-acv}
    \boldsymbol{\Tilde{Q}}(\boldsymbol{z};\alpha,\CA) = \boldsymbol{\hat{Q}}_{0} + \alpha^{\top}\boldsymbol{\Delta}.
\end{align}
The variance of the ACV estimator is:
\begin{align}\label{eqn:acv-variance}
    \Var[\boldsymbol{\Tilde{Q}}(\boldsymbol{z};\alpha,\CA)] = \Var[\boldsymbol{\hat{Q}}_{0}] - \alpha^{\top}\Cov[\boldsymbol{\Delta}, \boldsymbol{\Delta}]\alpha + 2 \alpha^{\top} \Cov[\boldsymbol{\Delta}, \boldsymbol{\hat{Q}}_{0}].
\end{align}
This variance depends on the covariances among model outputs, the control variate weights $\alpha$, and the sample partitioning strategy $\CA$, which affects how each $\boldsymbol{\Delta}$ term covaries with the others and with $\boldsymbol{\hat{Q}}_{0}$. If the exact model output covariance is known, then the optimal weights for a given $\CA$ can be obtained by minimizing the variance expression in \eqref{eqn:acv-variance}, yielding:
\begin{align}\label{eqn:alpha-star-acv}
    \alpha^{\ast}(\CA)=-\Cov[\boldsymbol{\Delta}, \boldsymbol{\Delta}]^{-1}\Cov[\boldsymbol{\Delta}, \boldsymbol{\hat{Q}}_{0}].
\end{align}
Substituting these optimal weights into \eqref{eqn:acv-variance} gives the corresponding minimized estimator variance~\cite{Gorodetsky2020}:
\begin{align}\label{eqn:acv-variance-opt}
    \Var[\boldsymbol{\Tilde{Q}}^{\alpha^{\ast}}](\CA) = \Var[\boldsymbol{\hat{Q}}_{0}] - \Cov[\boldsymbol{\Delta}, \boldsymbol{\hat{Q}}_{0}]^{\top} \Cov[\boldsymbol{\Delta}, \boldsymbol{\Delta}]^{-1}\Cov[\boldsymbol{\Delta}, \boldsymbol{\hat{Q}}_{0}].
\end{align}

The sample allocation $\CA$ determines how the computational workload is distributed across the models. For example, the well-known MLMC \cite{giles_multilevel_2015}  and MFMC \cite{peherstorfer_optimal_2016} methods can be shown to be special cases of ACV, distinguished by their restricted families of allowable allocations $\CA$. Solving for the optimal sample partitioning strategy $\CA^{\ast}$ is generally more involved than solving for the optimal weights $\alpha^{\ast}$, but several efficient tools have been developed to approximately solve the following optimization problem \cite{bomarito_multi_2020, jakeman_pyapprox_2023}:
\begin{align}
    \min_{\mathcal{A}\in\mathbb{A}} &\quad \Var[\boldsymbol{\Tilde{Q}}(\boldsymbol{z};\alpha^{\ast},\mathcal{A})] \label{eqn:mxmcpy}\\
    \mathrm{subject \,\, to} &\quad \mathcal{W}(w,\mathcal{A})\leq 
    B,\label{eqn:mxmcpy_constraint}
\end{align}
where $\mathbb{A}$ is a predefined set of allowable sample allocations, $B$ is the total computational budget, $w = [w_0, \ldots, w_M]$ are the per-sample model costs, and $\mathcal{W}(w,\mathcal{A})$ denotes the total cost of the estimator. Practically, the total budget $B$ must be sufficiently large to allow for at least one high-fidelity model evaluation under the optimal allocation $\CA^\ast$; we will always assume this is the case in this paper.

Importantly, solving both \eqref{eqn:alpha-star-acv} for $\alpha^{\ast}$ and \eqref{eqn:mxmcpy} for $\CA^{\ast}$ requires knowledge of the model-output covariance matrix: 
\begin{align}\label{eqn:cov_defn}
    \Sigma = \Cov\left[ f_0(\boldsymbol{Z}),\ldots,f_{M-1}(\boldsymbol{Z})
    \right].
\end{align}
While the relationship between the weights $\alpha$ and the estimator variance is explicit in \eqref{eqn:acv-variance}, 
the dependence of the estimator variance on $\CA$ and $\Sigma$ arises via the terms:
\begin{align}
    \Cov[\boldsymbol{\Delta}, \boldsymbol{\Delta}] &= G(\CA) \odot \Sigma, \label{eqn:cov-delta-delta} \\
    \Cov[\boldsymbol{\Delta}, \boldsymbol{\hat{Q}}_{0}] &= \mathrm{diag}(G(\CA)) \odot \mathrm{col}_0( \Sigma ), \label{eqn:cov-delta-Q}
\end{align}
where $G \in \RR^{M\times M}$ is a matrix that depends on $\CA$ 
(see \cite{bomarito_multi_2020}), $\odot$ denotes the Hadamard (elementwise) product, and $\mathrm{col}_{0}( \Sigma )$ is the first column of $\Sigma$, representing the covariances between the high-fidelity model and each low-fidelity model.

\sloppy Although model costs $w$ are often known in advance, the covariance matrix $\Sigma$ is typically unknown \textit{a priori}. A common approach is to estimate it using $N_{\mathrm{pilot}}$ independent pilot evaluations via the unbiased sample covariance formula:
\begin{align}\label{eqn:cov}
    \hat{\Sigma} = \frac{1}{N_{\mathrm{pilot}}-1}\sum_{j=1}^{N_{\mathrm{pilot}}}\left( f^{(j)} - \overline{f} \right) \left( f^{(j)} - \overline{f} \right)^{\top},
\end{align}
where 
$f^{(j)}=[ f_{0}(z^{(j)}),\ldots,f_{M-1}(z^{(j)}) ]^{\top}$ and $\overline{f}=[ \overline{f}_{0},\ldots,\overline{f}_{M-1} ]^{\top}$ with $\overline{f}_{m}=\frac{1}{N_{\mathrm{pilot}}}\sum_{j=1}^{N_{\mathrm{pilot}}}f_{m}(z^{(j)})$. 
Choosing too few pilot samples can result in inaccurate covariance estimates, leading to suboptimal weights or allocations and thus poor estimator performance. On the other hand, excessive pilot sampling may exhaust the computational budget, leaving insufficient resources for the final ACV estimator. 
While this trade-off has been highlighted empirically in prior studies~\cite{peherstorfer_optimal_2016, Gorodetsky2020}, there remains a need for a principled framework to quantify and balance the trade-off
between pilot sampling effort and estimator performance under covariance uncertainty.

\subsection{Related work: Adaptive Explore-Then-Commit (AETC) Linear Regression Monte Carlo (LRMC) estimator}\label{sec:AETC}

When multi-fidelity estimators rely on approximate covariance estimated from pilot data, the exact mean-squared errors (MSE) of the estimators are generally intractable. To address this, \cite{xu_bandit-learning_2022} introduced a new estimator called the Linear Regression Monte Carlo (LRMC) estimator. Unlike ACV, which constructs an estimator using control variate weights applied to MC differences, LRMC estimates the high-fidelity mean via a regression model over low-fidelity models, assuming a linear relationship between them.

Under this assumption, the total estimator MSE was decomposed into two components: (1) a bias term due to limited pilot samples (exploration), and (2) a variance term due to limited budget for evaluating the LRMC estimator (exploitation). This decomposition enables empirical estimation of the MSE trade-off and serves as the basis for adaptively terminating pilot sampling. Although the exact conditional MSE---conditioned on pilot data---remains intractable, the authors derived an asymptotic form that converges to the true MSE as $N_{\mathrm{pilot}}\rightarrow \infty$. 

To use this formulation in practice, the authors proposed an Adaptive Explore-Then-Commit (AETC) algorithm. This algorithm iteratively acquires pilot samples, uses them to empirically estimate the asymptotic conditional MSE, and terminates pilot sampling once this estimate begins to increase.

While this represents the first systematic method for balancing pilot sampling with evaluation budget, the approach has several limitations. 
\begin{itemize}
\item \textbf{Suboptimal sample allocation.} 
Unlike other multi-fidelity estimators such as ACV, MLBLUE, or GACV---which explicitly optimize sample allocation across model fidelities to minimize error---the LRMC estimator uses a uniform allocation across all low-fidelity models. That is, each low-fidelity model is sampled equally, regardless of its relative cost or predictive utility. 
\item \textbf{Dependence on asymptotic assumptions.} The asymptotic MSE formulation does not hold in the small-sample regime. In such settings, the sample covariance estimates used to construct the LRMC estimator may be highly inaccurate and noisy. 
\end{itemize}
These issues are especially pronounced when the total computational budget is small relative to model evaluation costs, as may be the case when high-fidelity models are expensive to evaluate. 

\vspace{1em}
Our proposed method addresses these shortcomings in several ways. First, it is agnostic to the choice of multi-fidelity estimator, and it is here applied to ACV estimators that leverage optimal sample allocations, which typically outperform uniform allocations. 
Second, our approach is well suited to the small-sample regime, as it directly incorporates Bayesian UQ over the covariance matrix. 
Finally, the use of prior distributions to inform covariance estimation enables regularization and improved estimation accuracy during pilot sampling, particularly when data are limited. 

In essence, the method presented in this paper can be viewed as a Bayesian generalization of AETC---one that does not rely on the restrictive structure of LRMC and more flexibly adapts to practical budget-constrained multi-fidelity estimation settings.

\section{Bayesian inference of covariance matrices}\label{sec:bayes_inf}

Bayesian statistical inference
provides a principled framework for quantifying uncertainty in the covariance matrix $\Sigma$ from \eqref{eqn:cov_defn}, especially when only a limited number of pilot samples are available. 
This uncertainty-aware formulation enables principled decision-making about whether additional pilot samples are likely to improve estimator performance and guides pilot-sampling termination decisions.

We treat $\boldsymbol{\Sigma} \in \RR^{M \times M}$, a symmetric and positive semidefinite (PSD) covariance matrix, as a random matrix of uncertain parameters, and apply Bayes' rule to obtain the posterior distribution:
\begin{align}
    p(\Sigma|D) = \frac{p(D | \Sigma) \, p(\Sigma)}{p(D)},
    \label{eqn: bayes_rule_v1_cov}
\end{align}
where $D =\{f(z^{(j)})\}_{j=1}^{N_{\mathrm{pilot}}}$ denotes the set of pilot-sample observations.
The prior distribution $p(\Sigma)$ encodes our belief about the covariance structure before any data are collected,
while the likelihood $p(D|\Sigma)$ models the probability of observations data $D$ given a particular covariance matrix $\Sigma$. 
Specific choices of prior and likelihood will be discussed in Sections \ref{ss: inv_wis} and \ref{ss: bij_param}.
The resulting posterior $p(\Sigma | D)$ provides a full probabilistic characterization of uncertainty in the covariance matrix after incorporating pilot observations, and can be used to guide pilot-sampling termination decisions. 

In practice, however, exact sampling from the posterior is typically intractable, except in special cases involving conjugate priors. Conjugacy arises only under specific combinations of prior and likelihood distributions. 
In more general settings, posterior inference must rely on methods such as Markov chain Monte Carlo (MCMC), which often require many evaluations of the likelihood (and potentially its derivatives), and can therefore be computationally expensive, particularly for high-dimensional or complex models. 
To avoid these costs, our approaches below leverage conjugate priors and are designed to sidestep MCMC.

\subsection{Inverse Wishart prior} \label{ss: inv_wis}

First introduced in \cite{wishart_generalised_1928}, the inverse Wishart (IW) prior is a widely used distribution for Bayesian inference over covariance matrices due to its conjugacy with multivariate Gaussian likelihoods. The IW prior, denoted $\mathrm{IW}(\nu,H)$, has a density:
\begin{align}
    p^{(\mathrm{IW})}(\Sigma; H, \nu) = \frac{\lvert H\rvert^{\nu/2}}{2^{(\nu M)/2}\,\Gamma_p(\nu/2)}\lvert\Sigma\rvert^{-(\nu+M+1)/2}e^{-\frac{1}{2}\mathrm{tr}(H\Sigma^{-1})},
\end{align}
where $H$ is a positive definite scale matrix that controls the location (center) of the distribution over $\boldsymbol{\Sigma}$, and $\nu>M+1$ is a degrees-of-freedom parameter that determines how tightly the distribution concentrates around the center. The mean and mode of an IW distribution with parameters $H$ and $\nu$ are:
\begin{align}\label{eqn:iw_mean}
    \EE\left[ \boldsymbol{\Sigma} \right] = \frac{H}{\nu - M - 1}, \qquad
    \mathrm{Mode}\left[ \boldsymbol{\Sigma} \right] = \frac{H}{\nu + M + 1}.
\end{align} 
The parameter $\nu$ can be interpreted as the effective number of pseudo-samples contributed by the prior, reflecting the strength of prior belief. A common heuristic is to set $\nu$ according to one's level of prior confidence; for example, setting $\nu=M+2$ yields a minimally informative prior with finite mean, while larger values correspond to stronger prior confidence.
Once $\nu$ is set, $H$ can then be chosen to match a desired prior mean or mode using \eqref{eqn:iw_mean}.

A standard assumption is that the likelihood is multivariate Gaussian, i.e., $p(D|\Sigma) \sim \mathcal{N}(\mu, \Sigma)$. Under this Gaussian likelihood, the IW prior is conjugate, and the posterior distribution over $\boldsymbol{\Sigma}$ remains IW, with updated parameters:
\begin{align}
    H_D &= H + N_\mathrm{pilot} \times S_D, \\
    \nu_D &= \nu+N_\mathrm{pilot},
\end{align}
where $S_D=\sum_{j=1}^{N_\mathrm{pilot}}(f^{(j)}-\overline{f})(f^{(j)}-\overline{f})^{\top}$
is the sample scatter matrix.

While the IW distribution is analytically convenient and computationally efficient, it suffers from limited flexibility.
As noted in \cite{alvarez_bayesian_2016}, its parameterization couples the variances and correlations, making it difficult to encode structured prior knowledge. For instance, in a multi-fidelity setting, one may have stronger prior beliefs regarding low-fidelity model variances but remain uncertain about the high-fidelity model or inter-model correlations.
However, the single parameter $\nu$ jointly controls prior confidence over all covariance entries, preventing independent specification of beliefs. 
To address this limitation, several works have proposed more flexible alternatives, including over-parameterized extensions of the IW prior and re-parameterizations that treat variances and correlations separately \cite{omalley_domain_level_2008, barnard2000}.

In the next section, we introduce a prior that enables structured specification of prior beliefs across individual components of the covariance matrix.

\subsection{\texorpdfstring{$\boldsymbol{\gamma}$-Gaussian prior}{gamma-Gaussian prior}}\label{ss: bij_param}

We propose the $\gamma$-Gaussian prior
as another choice for setting up covariance inference that overcomes the inflexibility of the IW prior by using a bijective parameterization of the correlation matrix.
Unlike IW priors, which require multivariate Gaussian likelihood for conjugacy, the $\gamma$-Gaussian prior
places the Gaussian assumption on a set of correlation parameters rather than on the model outputs, a choice shown in \cite{archakov_new_2021} to be suitable in many cases. 
This avoids unrealistic Gaussian output assumptions, which can be particularly problematic for chaotic or unstable model ensembles, and offers greater flexibility in encoding prior beliefs. 

\subsubsection{Bijective parameterizations}
We begin by reviewing existing parameterizations of the covariance or correlation matrix. Since any covariance matrix can be decomposed into its standard deviations and a correlation matrix, we seek a parameterization of either object while satisfying the following criteria:
\begin{itemize}
\item maps to a set of unbounded parameters;
\item uses few parameters and is easy to compute; 
\item is bijective, so that each parameter set corresponds uniquely to a covariance matrix and vice versa; and
\item admits an inexpensive inverse mapping.
\end{itemize}

Several such parameterizations have been proposed in the literature, including the Cholesky decomposition \cite{pinheiro_unconstrained_1996}, Givens rotation-based methods \cite{rapisarda_parameterizing_2007,rebonato_most_2011}, and matrix logarithm-based approaches \cite{archakov_new_2021, higham_11_2008, chiu_matrix-logarithmic_1996}.
A comparative study in \cite{bucci_comparing_2022} evaluated these parameterizations 
and found that, on average, matrix logarithm \cite{archakov_new_2021} and Cholesky decomposition \cite{pinheiro_unconstrained_1996} delivered the best performance.

The matrix logarithm parametrization was first used for covariance modeling in \cite{chiu_matrix-logarithmic_1996}, where the entries of the covariance matrix logarithm \cite{higham_11_2008} were regressed on explanatory variables. More recently, \cite{archakov_new_2021} applied this idea to the \textit{correlation} matrix---we call this approach the $\gamma$-transform in our paper. 
We adopt the $\gamma$-transform because it enables flexible prior specification by separating the correlation structure from the standard deviations while satisfying the criteria above.
This separation is particularly advantageous in multi-fidelity settings, where users may have prior knowledge about correlations between model fidelities but little information about the standard deviations, or vice versa.

A covariance matrix $\Sigma$ can be decomposed into a diagonal matrix of standard deviations $D$ and a correlation matrix $R$ via:
\begin{align}\label{eqn:cov_to_corr_sigma}
    \Sigma = D \, R \, D.
\end{align}
Let $\Gamma$ denote the forward $\gamma$-transform and $\mathrm{vecl}(A)$ the vectorization of the lower-triangular elements of a square matrix $A$. The transformation is:
\begin{align}\label{eqn:gamma_mapping}
    \gamma = \Gamma(R) = \mathrm{vecl} \left( \log(R) \right),
\end{align}
where $\log(R)$ is the matrix logarithm. 
This mapping uses $\frac{M(M-1)}{2}$ parameters for the correlation matrix $R$. Combined with the $M$ (log) standard deviations, $\log \sigma = \log(\diag(D))$, this yields a total of $\frac{M(M+1)}{2}$ parameters required to fully specify the covariance matrix $\Sigma$.

For both the forward and inverse $\gamma$-transform, we use numerical implementations from~\cite{archakov_new_2021}, reproduced in Appendix~\ref{app: bij_param}.
Both algorithms are fast, capable of processing hundreds of samples per second. 
The inverse transform is computed using a fixed-point iteration (Algorithm~\ref{alg:gamma-inverse}), for which we adopt a convergence tolerance of $\mathtt{tol}=10^{-7}$.

\subsubsection{Approximate Bayesian inference for \texorpdfstring{$\boldsymbol{\gamma}$}{TEXT}-Gaussian prior}
\label{ss:gamma-inference}

We now perform approximate Bayesian inference for the covariance matrix parameters. While it is possible to infer $\boldsymbol{\gamma}$ and $\log \boldsymbol{\sigma}$ jointly, we adopt explicit simplifying independence assumptions for computational efficiency and flexibility in the Bayesian modeling of $\log \boldsymbol{\sigma}$, leading to the following Bayes' rule expressions:
\begin{align}
    p(\gamma | \hat{\gamma}) = \frac{p(\hat{\gamma}|\gamma) \, p(\gamma)}{p(\hat{\gamma})}, \qquad p(\log \sigma | \log \hat{\sigma}) = \frac{p(\log \hat{\sigma}|\log \sigma) \, p(\log \sigma)}{p(\log \hat{\sigma})},
    \label{e:Bayes}
\end{align}
where $\hat{\gamma}$ and $\log \hat{\sigma}$ are the observed data derived from pilot samples.  
These assumptions include:
\begin{itemize}
\item the priors for $\boldsymbol{\gamma}$ and $\log \boldsymbol{\sigma}$ are independent;
\item the likelihoods for $\boldsymbol{\hat{\gamma}}$ and $\log \boldsymbol{\hat{\sigma}}$ are conditionally independent; and
\item each likelihood depends only on its corresponding parameter set, i.e., likelihood for $\boldsymbol{\hat{\gamma}}$ only depends on ${\gamma}$, and 
likelihood for $\log \boldsymbol{\hat{\sigma}}$ only depends on $\log {\sigma}$.
\end{itemize}

\paragraph{Priors}
We place independent Gaussian priors:
\begin{align}
    p(\gamma) \sim \mathcal{N}(\mu_{0,\gamma},\Sigma_{0,\gamma}), \qquad 
    p(\log \sigma) \sim \mathcal{N}(\mu_{0,\log \sigma},\Sigma_{0,\log \sigma}),
    \label{e:gamma-priors}
\end{align}
where $\mu_{0}$ and $\Sigma_{0}$ denote prior means and covariances.

\paragraph{Likelihood construction}
We assume a Wishart likelihood in covariance-matrix space and map it into the $\gamma$ and $\log \sigma$ spaces, where we estimate Gaussian likelihoods. 
On the data side, we compute the biased sample covariance (needed for Wishart compatibility) from the $N_{\mathrm{pilot}}$ pilot samples:
\begin{align}
\hat{C} =  \frac{1}{N_{\mathrm{pilot}}} \sum_{j=1}^{N_{\mathrm{pilot}}} (f^{(j)} - \bar{f})(f^{(j)} - \bar{f})^\top,
\end{align}
and factor $\hat{C} = \hat{D} \, \hat{R} \, \hat{D}$ to obtain 
$\hat{\gamma} = \Gamma(\hat{R})$ and $
\log \hat{\sigma}=\log ( \mathrm{diag}(\hat{D}))$.

On the model side, given $\gamma$ and $\log \sigma$, we first draw $N_{\mathrm{sim}}$ 
independent \textit{simulated} covariances from the Wishart distribution:
\begin{align}
\hat{C}_{\mathrm{sim}}^{(j)} \Big| \gamma , \log\sigma &\sim \mathrm{Wishart}\left(\frac{1}{N_{\mathrm{pilot}}}D \, \Gamma^{-1}(\gamma)\, D, N_{\mathrm{pilot}}\right),
\label{e:Wishart}
\end{align}
where $D$ is formed from $\log \sigma$.
Each sample is factored into $\hat{C}_{\mathrm{sim}}^{(j)} = \hat{D}_{\mathrm{sim}}^{(j)} \, \hat{R}_{\mathrm{sim}}^{(j)} \, \hat{D}_{\mathrm{sim}}^{(j)}$, then mapped to
\begin{align}
\hat{\gamma}_{\mathrm{sim}}^{(j)} = \Gamma\left(\hat{R}_{\mathrm{sim}}^{(j)}\right), \qquad
\log \hat{\sigma}^{(j)}_{\mathrm{sim}}=\log \left( \mathrm{diag}(\hat{D}_{\mathrm{sim}}^{(j)}) \right).
\end{align}
From these, we compute empirical means and covariances:
\begin{align}
    \mu_{\gamma} = \frac{1}{N_{\mathrm{sim}}} \sum_{j=1}^{N_{\mathrm{sim}}} \hat{\gamma}^{(j)}_{\mathrm{sim}}, 
    \qquad
    \Sigma_{\gamma} = \frac{1}{N_{\mathrm{sim}}-1} \sum_{j=1}^{N_{\mathrm{sim}}} (\hat{\gamma}_{\mathrm{sim}}^{(j)} - \mu_{\gamma})(\hat{\gamma}_{\mathrm{sim}}^{(j)} - \mu_{\gamma})^\top,
    \label{eqn:gamma_likelihood_terms}
\end{align}
and similarly for $\mu_{\log \sigma}$ and $\Sigma_{\log \sigma}$.
Invoking the independence assumptions, we form Gaussian likelihoods:
\begin{align}
p(\hat{\gamma} | \gamma) \sim \mathcal{N}(\mu_{\gamma}(\gamma), \Sigma_{\gamma}(\gamma)), \qquad
p(\log \hat{\sigma} | \log \sigma) \sim \mathcal{N}(\mu_{\log \sigma}(\log \sigma), \Sigma_{\log \sigma}(\log \sigma)),
\label{e:gamma-likelihoods}
\end{align}
where in practice $\Sigma_{\log \sigma}$ is further taken to be diagonal. 

\paragraph{Posterior Update}
With Gaussian priors and likelihoods, the posteriors remain Gaussian in closed form:
\begin{align}\label{eqn:gamma-posterior}
    p(\gamma|\hat{\gamma}) \sim \mathcal{N}(\mu_{\mathrm{post},\gamma},\Sigma_{\mathrm{post},\gamma}), 
    \qquad
    p(\log \sigma|\log \hat{\sigma}) \sim \mathcal{N}(\mu_{\mathrm{post},\log \sigma},\Sigma_{\mathrm{post},\log \sigma}),
\end{align}
where
\begin{align}\label{eqn:gamma-posterior_update}
    \mu_{\mathrm{post},\gamma} = \Sigma_{\mathrm{post},\gamma} \left(  \Sigma_{\gamma}^{-1} \mu_{\gamma} + \Sigma_{0,\gamma\,}^{-1}\mu_{0,\gamma} \right), \qquad
    \Sigma_{\mathrm{post},\gamma} = \left( \Sigma_{\gamma}^{-1}+ \Sigma_{0,\gamma}^{-1}\right)^{-1},
\end{align}
and analogously for $\mu_{\mathrm{post},\log \sigma}$ and $\Sigma_{\mathrm{post},\log \sigma}$.

\paragraph{Remarks}
The independence assumptions are a deliberate modeling choice for computational efficiency and flexibility. A fully joint Gaussian model could capture correlations between $\boldsymbol{\gamma}$ and $\log \boldsymbol{\sigma}$, but would require manipulating larger dense covariance matrices.
Together with efficient $\Gamma(\cdot)$ evaluation and Wishart sampling, the approximate inference is quite fast. We thus recommend a large $N_{\mathrm{sim}}$; for example, we use $N_{\mathrm{sim}}=1000$ in our numerical experiments unless otherwise specified.
Because the mapping $\hat{C} \mapsto \hat{\gamma}$ is nonlinear, the empirical estimates in \eqref{eqn:gamma_likelihood_terms} can be sensitive to outliers. Winsorization or truncation of $\hat{\gamma}$ samples (we use truncation at the 0.2--0.8 quantiles) helps stabilize the posterior. 
The truncation level is a tunable hyperparameter: stronger truncation may produce overconfident posteriors and premature pilot-sampling termination, while weaker truncation may yield broader posteriors and delayed pilot-sampling termination. 
Finally, while we model standard deviations as log-normal here, the independence structure conveniently allows alternative prior-likelihood choices without affecting the treatment of $\boldsymbol{\gamma}$.

\subsection{Other prior choices}\label{ss:other_priors}

There is a significant body of research on Bayesian priors for covariance and correlation matrices; we provide a necessarily non-exhaustive summary here. Extensions of the IW prior have sought to improve its behavior via shrinkage \cite{berger_bayesian_2020} or to increase flexibility by introducing additional scale parameters that separately control the variances and the correlation matrix \cite{omalley_domain_level_2008}. These modifications, however, forfeit conjugacy.

The closest related work to our proposed $\gamma$-Gaussian prior are \cite{leonard_bayesian_1992} and \cite{Hsu_Sinay_Hsu_2010}, which place a multivariate Gaussian prior on the elements of the matrix logarithm of the \textit{covariance} matrix. 
This approach allows a Gaussian likelihood approximation via Bellman's iterative solution to the linear Volterra integral function, retaining conjugacy \cite{leonard_bayesian_1992}. In contrast, \cite{Hsu_Sinay_Hsu_2010} replaces conjugacy with a Laplace approximation or MCMC, with the latter showing better empirical performance. 
A key distinction from our approach is that these priors are placed on the \textit{entire} covariance matrix, making it less straightforward to encode multi-fidelity prior knowledge where one may have separate belief levels about correlations and standard deviations.  

In many probabilistic programming frameworks, the default prior on the correlation matrix is the Lewandowski--Kurowicka--Joe (LKJ) prior \cite{lewandowski_generating_2009}, which is well-behaved and relatively noninformative when its sole parameter is set to $\eta=1$, but is not conjugate.
An informative version was introduced in \cite{martin_informative_2021}, in which each correlation-matrix element penalized by a Gaussian weight: $p(R) \propto \mathrm{LKJ}(R|\eta=1)\times\prod_{j=1}^{\frac{M(M-1)}{2}}\mathcal{N}(R_j;\mu=\mu_j,\sigma=\sigma_j)$, where $R_j$ is an upper triangular element of $R$ with associated prior mean $\mu_j$ and variance $\sigma_j$. 
This formulation allows interpretable, element-wise prior beliefs while preserving PSD, but is neither conjugate nor proper, making both prior and posterior sampling require MCMC.

A notable advantage of the $\gamma$-Gaussian prior is its interpretability: correlations and standard deviations are modeled separately, allowing distinct priors for each. 
For less informative settings, one can map a prior mean correlation matrix to $\boldsymbol{\gamma}$ and assign a diagonal Gaussian prior on $\boldsymbol{\gamma}$ with large variances. For more informative settings, we recommend a procedure similar to that in Section~\ref{ss:gamma-inference}: 
\begin{itemize}
\item specify a prior on the correlation matrix that is easier to interpret (e.g., the modified LKJ prior from \cite{martin_informative_2021}, which allows element-wise mean and variance specification);
\item sample correlation matrices from this prior (via MCMC if needed);
\item transform the samples into parameters $\gamma$ and $\log \sigma$; and 
\item perform moment-matching on the transformed samples to define a Gaussian prior.
\end{itemize}
While sampling from the modified LKJ prior requires MCMC, the cost is negligible compared to high-fidelity model evaluations, making this approach practical for constructing informative $\gamma$-Gaussian priors. 

\section{Adaptive pilot sampling termination}\label{sec:adaptive_pilot_sampling}

We propose a novel adaptive termination procedure for pilot sampling, guided by the Bayesian posterior of the model-output covariance matrix. The procedure is based on a posterior-expected loss that quantifies the inefficiency of a given ACV estimator if pilot sampling were to continue. 

\subsection{Expected estimator loss}\label{sec:expected_loss}
Let $\zeta = \{\alpha, \CA\}$ denote the ACV hyperparameters, and let $\zeta_{\Sigma', B', w'}$ represent the optimal hyperparameters for an ACV estimator when the model-output covariance matrix is $\Sigma'$, the estimator budget $B'$, and model costs are $w'$, as determined via \eqref{eqn:alpha-star-acv} and \eqref{eqn:mxmcpy}. 

We define the \textit{best-case} estimator as the (infeasible) estimator that has (1) access to the oracle (exact) covariance matrix $\Sigma_{\mathrm{or}}$ and costs vector $w_{\mathrm{or}}$, 
and (2) the entire budget $B_{\mathrm{tot}}$ allocated to ACV estimation (i.e., no budget spent on pilot sampling). We denote this best-case estimator by $\boldsymbol{\Tilde{Q}}(\boldsymbol{z}; \zeta_{\Sigma_{\mathrm{or}}, B_{\mathrm{tot}}, w_{\mathrm{or}}})$.
In most practical settings, the model costs are known or easily estimated \textit{a priori}, so for simplicity we omit explicit dependence 
on $w$ for the remainder of this section.

\paragraph{Loss definition}
The key idea is that an estimator whose hyperparameters are optimized using nominal values of $(\Sigma', B')$ will generally have a different \textit{actual variance} when evaluated under the true covariance matrix $\Sigma_{\mathrm{or}}$.
Specifically, if $\Sigma' \neq \Sigma_{\mathrm{or}}$, there is a covariance \textit{accuracy loss}, and if $B' \neq B_{\mathrm{tot}}$, there is a budget  \textit{cost loss}---we will analyze these contributions further in Section~\ref{sec:loss_decomposition}.

The true variance of $\boldsymbol{\Tilde{Q}}(\boldsymbol{z}; \zeta_{\Sigma', B'})$ under $\Sigma_{\mathrm{or}}$ can be written as:
\begin{align}\label{eqn:true_estimator_variance}
     \Var\left[ \boldsymbol{\Tilde{Q}}(\boldsymbol{z}; \zeta_{\Sigma', B'})\right] \Big| \Sigma_{\mathrm{or}}  &=  \left(\Var[\boldsymbol{\hat{Q}}_{0}] \Big| \Sigma_{\mathrm{or}}\right) - \alpha_{\Sigma', B'}^{\top}\Big(\Cov[\boldsymbol{\Delta}, \boldsymbol{\Delta}] \Big| \Sigma_{\mathrm{or}}\Big)^{-1}\alpha_{\Sigma', B'} \\ 
     &\hspace{1.5em}+2 \alpha_{\Sigma', B'}^{\top} \left(\Cov[\boldsymbol{\Delta}, \boldsymbol{\hat{Q}}_{0}] \Big| \Sigma_{\mathrm{or}}\right) \nonumber,
\end{align}
where the conditioning notation in $\left(\Var[\boldsymbol{\hat{Q}}_{0}] \Big| \Sigma_{\mathrm{or}}\right)$ denotes that the variance is evaluated with $\Sigma_{\mathrm{or}}$ (and similarly for other variance and covariance terms), and $\alpha_{\Sigma', B'}$ are the optimal weights under $(\Sigma',B')$. 

We define the loss (or inefficiency) of 
using $(\Sigma',B')$
instead of $(\Sigma_{\mathrm{or}},B_{\mathrm{tot}})$ as:
\begin{align}\label{eqn:estimator_loss}
    \CL\left(\Sigma', B', \Sigma_{\mathrm{or}}, B_{\mathrm{tot}}\right) = \underbrace{\left(\Var[\boldsymbol{\Tilde{Q}}(\boldsymbol{z}; \zeta_{\Sigma', B'})] \Big| \Sigma_{\mathrm{or}} \right)}_{\text{actual estimator variance}} - \underbrace{\left( \Var[\boldsymbol{\Tilde{Q}}(\boldsymbol{z}; \zeta_{\Sigma_{\mathrm{or}}, B_{\mathrm{tot}}})] \Big| \Sigma_{\mathrm{or}} \right)}_{\text{best-case estimator variance}}.
\end{align}
This quantity measures the increase in ACV variance due to pilot sampling, incorporating both budget and covariance estimation errors. 

\paragraph{Posterior-expected loss}
In practice, $\Sigma_{\mathrm{or}}$ is unknown. Instead, we compute the \textit{expected loss} by taking the expectation of \eqref{eqn:estimator_loss} over the random variable $\boldsymbol{\Sigma}$ that represents the possible oracle covariance matrices, given pilot sample data $D$:
\begin{align}
    \ExpLoss\big(\Sigma', B', p_{\boldsymbol{\Sigma}| {D}}, B_{\mathrm{tot}}\big) &= \EE_{\boldsymbol{\Sigma}| {D}} \left[ \CL(\Sigma', B', \boldsymbol{\Sigma}, B_{\mathrm{tot}}) \right] \label{eqn:expected_loss}  \\
    &= \EE_{\boldsymbol{\Sigma}| D} \left[ \left(\Var[\boldsymbol{\boldsymbol{\Tilde{Q}}}(\boldsymbol{z}; \zeta_{\Sigma', B'})] \Big| \boldsymbol{\Sigma} \right) - \left( \Var[\boldsymbol{\boldsymbol{\Tilde{Q}}}(\boldsymbol{z}; \zeta_{\Sigma, B_{\mathrm{tot}}})] \Big| \boldsymbol{\Sigma} \right) \right]. 
\end{align}
Here, the variance terms are now averaged over the posterior uncertainty in $\boldsymbol{\Sigma}$. 

\paragraph{Monte Carlo approximation}
We approximate \eqref{eqn:expected_loss} using Monte Carlo:
\begin{align}
    \ExpLoss\big(\Sigma', B', p_{\boldsymbol{\Sigma}| D}, B_{\mathrm{tot}}, \big) \approx \frac{1}{N_{\mathrm{MC}}} \sum_{i=1}^{N_{\mathrm{MC}}} \CL(\Sigma', B', \Sigma^{(i)}, B_{\mathrm{tot}}),
    \label{eqn:MC_loss}
\end{align}
where $\Sigma^{(i)}\sim p(\Sigma | D)$ are independent posterior draws. 

Evaluating \eqref{eqn:MC_loss} requires solving the ACV optimization problem in \eqref{eqn:mxmcpy}--\eqref{eqn:mxmcpy_constraint} $N_{\mathrm{MC}}$ times to obtain $\zeta_{\Sigma,B_{\mathrm{tot}}}$ for each $\Sigma^{(i)}$ sample, plus an additional solve for $\zeta_{\Sigma',B'}$. 
The MXMCPy tool \cite{bomarito_multi_2020} can perform these computations quickly relative to a high-fidelity model run, and the procedure is embarrassingly parallelizable. 

To further reduce computational cost, we fix the ACV estimator's allocation strategy to whichever family (e.g., `ACVIS', `WRDIFF', `MFMC'; see \cite{bomarito_multi_2020} for a full list) performs best for $(\Sigma',B')$. This restriction makes the expected loss equivalent to the inefficiency due to pilot sampling within that strategy family. In our experiments, this simplification did not meaningfully change termination results but yielded significant speedups.

\subsection{Interpreting the expected loss}\label{sec:loss_decomposition}

A key benefit of the expected loss is that it admits a natural decomposition into two components: (1) accuracy loss: inefficiency due to inaccuracy in the covariance estimate; and (2) cost loss: inefficiency due to the budget consumed by pilot sampling. 

\paragraph{Loss decomposition}

We can rewrite the total estimator loss from \eqref{eqn:estimator_loss} as:

\bigskip
\bigskip
\begin{align} \label{e:decomposition}
  \CL(\Sigma', B', \Sigma_{\mathrm{or}}, B_{\mathrm{tot}}) &= \color{Fuchsia}{\eqnmarkbox[colorNb]{boxlabel1}{\left(\Var[\boldsymbol{\Tilde{Q}}(\boldsymbol{z}; \zeta_{\Sigma', B'})] \Big| \Sigma_{\mathrm{or}} \right) - \left( \Var[\boldsymbol{\Tilde{Q}}(\boldsymbol{z}; \zeta_{\Sigma_{\mathrm{or}}, B'})] \Big| \Sigma_{\mathrm{or}} \right)}} \\&+ \color{DarkGoldenrod3}{\eqnmarkbox[colorY]{boxlabel2}{\left( \Var[\boldsymbol{\Tilde{Q}}(\boldsymbol{z}; \zeta_{\Sigma_{\mathrm{or}}, B'})] \Big| \Sigma_{\mathrm{or}} \right) - \left( \Var[\boldsymbol{\Tilde{Q}}(\boldsymbol{z}; \zeta_{\Sigma_{\mathrm{or}}, B_{\mathrm{tot}}})] \Big| \Sigma_{\mathrm{or}} \right).}} \nonumber
\end{align}
\vspace{15mm}
\annotate[yshift=1em]{above}{boxlabel1}{\large\textbf{Accuracy Loss}}
\annotate[yshift=-1em]{below}{boxlabel2}{\large\textbf{Cost Loss}}

The second and third terms in \eqref{e:decomposition} cancel, leaving the first and last terms unchanged from \eqref{eqn:estimator_loss}. 

\paragraph{Accuracy loss}
The first two terms in \eqref{e:decomposition} together define the accuracy loss:
\begin{align}
    \CL_{A}(\Sigma', B', \Sigma_{\mathrm{or}}) = \left(\Var[\boldsymbol{\Tilde{Q}}(\boldsymbol{z}; \zeta_{\Sigma', B'})] \Big| \Sigma_{\mathrm{or}} \right) - \left( \Var[\boldsymbol{\Tilde{Q}}(\boldsymbol{z}; \zeta_{\Sigma_{\mathrm{or}}, B'})] \Big| \Sigma_{\mathrm{or}} \right),
\end{align}
which compares two estimators with the same budget $B'$ but different covariance matrices: $\Sigma'$ versus $\Sigma_{\mathrm{or}}$. This isolates the effect of covariance estimation error on ACV variance reduction.

\begin{lemma}\label{lem:accuracy_loss_positive}
    If
    $\Sigma' \neq \Sigma_{\mathrm{or}}$, then $\CL_{A}(\Sigma', B', \Sigma_{\mathrm{or}}) \geq 0$.
\end{lemma}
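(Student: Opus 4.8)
The plan is to establish the lemma through a direct optimality argument, with no substantial computation required. The crucial structural observation is that the two variance terms composing $\CL_A$ are both evaluated under the \emph{same} oracle covariance $\Sigma_{\mathrm{or}}$ and subject to the \emph{same} estimator budget $B'$; they differ only in the covariance matrix used to \emph{select} the ACV hyperparameters $\zeta = \{\alpha, \CA\}$. Since the subtracted term uses hyperparameters tuned to $\Sigma_{\mathrm{or}}$ itself, it is by construction the best achievable, so the difference can only be nonnegative.

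First I would invoke the defining property of $\zeta_{\Sigma_{\mathrm{or}}, B'}$. Combining the optimal-weight formula \eqref{eqn:alpha-star-acv} with the allocation optimization \eqref{eqn:mxmcpy}--\eqref{eqn:mxmcpy_constraint}, the pair $\zeta_{\Sigma_{\mathrm{or}}, B'}$ jointly minimizes the ACV variance $\Var[\boldsymbol{\Tilde{Q}}(\boldsymbol{z}; \alpha, \CA)]$, evaluated under $\Sigma_{\mathrm{or}}$, over all weights $\alpha$ and all allocations $\CA \in \mathbb{A}$ satisfying the budget constraint $\mathcal{W}(w, \CA) \leq B'$. Hence $\Var[\boldsymbol{\Tilde{Q}}(\boldsymbol{z}; \zeta_{\Sigma_{\mathrm{or}}, B'})] \mid \Sigma_{\mathrm{or}}$ is precisely the optimal value of this constrained minimization.

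Next I would verify that $\zeta_{\Sigma', B'}$ is a \emph{feasible} point of that very problem and then invoke optimality. The weights are unconstrained, and the allocation $\CA_{\Sigma', B'}$ obeys $\mathcal{W}(w, \CA_{\Sigma', B'}) \leq B'$ because it was produced by \eqref{eqn:mxmcpy_constraint} under the identical budget $B'$ and cost vector $w$ (suppressed throughout this section). Thus $\zeta_{\Sigma', B'}$ is admissible but, when $\Sigma' \neq \Sigma_{\mathrm{or}}$, generally not the minimizer for $\Sigma_{\mathrm{or}}$, so
\begin{align*}
  \Var[\boldsymbol{\Tilde{Q}}(\boldsymbol{z}; \zeta_{\Sigma', B'})] \mid \Sigma_{\mathrm{or}} \;\geq\; \Var[\boldsymbol{\Tilde{Q}}(\boldsymbol{z}; \zeta_{\Sigma_{\mathrm{or}}, B'})] \mid \Sigma_{\mathrm{or}},
\end{align*}
which is exactly the claim $\CL_A(\Sigma', B', \Sigma_{\mathrm{or}}) \geq 0$.

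The only genuine subtlety, and the step I would handle most carefully, is this feasibility check: one must confirm that optimizing the allocation for the \emph{wrong} covariance $\Sigma'$ still yields a budget-admissible allocation for the $(\Sigma_{\mathrm{or}}, B')$ problem. Once feasibility is secured, nonnegativity is immediate. It is worth noting that the suboptimality enters at two levels---the allocation $\CA_{\Sigma', B'}$ may be inferior to $\CA_{\Sigma_{\mathrm{or}}, B'}$, and even holding the allocation fixed the weights $\alpha_{\Sigma', B'}$ minimize the quadratic variance form \eqref{eqn:acv-variance} under $\Sigma'$ rather than $\Sigma_{\mathrm{or}}$---but the single joint-optimality inequality above absorbs both effects at once.
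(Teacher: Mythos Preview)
Your proposal is correct and follows essentially the same optimality argument as the paper: both observe that $\zeta_{\Sigma_{\mathrm{or}},B'}$ is by definition the variance-minimizing hyperparameter choice under $\Sigma_{\mathrm{or}}$ and budget $B'$, so any other feasible $\zeta$ (in particular $\zeta_{\Sigma',B'}$) yields variance no smaller. Your explicit verification that $\zeta_{\Sigma',B'}$ is budget-feasible for the $(\Sigma_{\mathrm{or}},B')$ problem is a welcome clarification that the paper leaves implicit.
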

\begin{proof}
    Following \eqref{eqn:alpha-star-acv} and \eqref{eqn:mxmcpy}--\eqref{eqn:mxmcpy_constraint}, the variance-minimizing hyperparameters under $\Sigma_{\mathrm{or}}$ and $B'$ are $\zeta_{\Sigma_{\mathrm{or}},B'}=\{\alpha^\ast,\CA^\ast\}$. For any $\zeta'\neq\zeta_{\Sigma_{\mathrm{or}},B'}$, 
    the resulting variance therefore must be no smaller, hence $\CL_{A}(\Sigma', B', \Sigma_{\mathrm{or}}) \geq 0$.
\end{proof}

\paragraph{Cost loss}
The last two terms in \eqref{e:decomposition} together define the cost loss:
\begin{align}\label{eqn:cost_loss}
    \CL_{C}(B', \Sigma_{\mathrm{or}}, B_{\mathrm{tot}}) = \left( \Var[\boldsymbol{\Tilde{Q}}(\boldsymbol{z}; \zeta_{\Sigma_{\mathrm{or}}, B'})] \Big| \Sigma_{\mathrm{or}} \right) - \left( \Var[\boldsymbol{\Tilde{Q}}(\boldsymbol{z}; \zeta_{\Sigma_{\mathrm{or}}, B_{\mathrm{tot}}})] \Big| \Sigma_{\mathrm{or}} \right),
\end{align}
which compares two estimators with the same covariance matrix $\Sigma_{\mathrm{or}}$ but different budgets: $B'$ versus $B_{\mathrm{tot}}$.
This isolates the effect of budget reduction from pilot sampling. 

\begin{lemma}\label{lem:cost_loss_positive}
    If $B'<B_{\mathrm{tot}}$, 
    then $\CL_{C}(B', \Sigma_{\mathrm{or}}, B_{\mathrm{tot}}) \geq 0$.
\end{lemma}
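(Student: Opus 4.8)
The plan is to establish this by a monotonicity argument on the budget-constrained allocation optimization, closely paralleling the optimality argument used for Lemma~\ref{lem:accuracy_loss_positive}. The crucial structural fact is that both variance terms in $\CL_C$ are evaluated under the \emph{same} oracle covariance $\Sigma_{\mathrm{or}}$, so the two estimators differ only in the budget that was used to select their hyperparameters; relaxing the budget can only help the minimization.

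First I would recall from \eqref{eqn:mxmcpy}--\eqref{eqn:mxmcpy_constraint} that, for a fixed covariance $\Sigma_{\mathrm{or}}$, the hyperparameters $\zeta_{\Sigma_{\mathrm{or}},B}$ are obtained by minimizing $\Var[\boldsymbol{\Tilde{Q}}(\boldsymbol{z};\alpha^\ast,\CA)]$ over $\CA\in\mathbb{A}$ subject to $\mathcal{W}(w,\CA)\leq B$. I would emphasize that for any fixed $\CA$ the optimal weights $\alpha^\ast(\CA)$ in \eqref{eqn:alpha-star-acv} depend only on $\Sigma_{\mathrm{or}}$ and $\CA$ and \emph{not} on $B$, so the objective being minimized is literally the same function of $\CA$ in the $B'$ problem and the $B_{\mathrm{tot}}$ problem---the budget enters solely through the constraint.

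The key step is then to compare feasible sets. Letting $\CA'$ denote the optimal allocation under $(\Sigma_{\mathrm{or}},B')$, its cost satisfies $\mathcal{W}(w,\CA')\leq B'<B_{\mathrm{tot}}$, so $\CA'$ is feasible for the $(\Sigma_{\mathrm{or}},B_{\mathrm{tot}})$ problem. Consequently the $B_{\mathrm{tot}}$-optimal variance is no larger than the variance attained by $\CA'$, which is exactly the $B'$-optimal variance; equivalently, the feasible set under $B'$ is nested inside that under $B_{\mathrm{tot}}$, and minimizing a fixed objective over a larger set cannot increase the minimum. This yields
\[
\left(\Var[\boldsymbol{\Tilde{Q}}(\boldsymbol{z};\zeta_{\Sigma_{\mathrm{or}},B_{\mathrm{tot}}})]\,\Big|\,\Sigma_{\mathrm{or}}\right)\leq\left(\Var[\boldsymbol{\Tilde{Q}}(\boldsymbol{z};\zeta_{\Sigma_{\mathrm{or}},B'})]\,\Big|\,\Sigma_{\mathrm{or}}\right),
\]
which is precisely $\CL_C(B',\Sigma_{\mathrm{or}},B_{\mathrm{tot}})\geq 0$ by \eqref{eqn:cost_loss}.

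I anticipate no substantial obstacle: the argument is essentially a relaxation/nested-feasible-set observation. The only care required is to state the nesting precisely and to confirm that enlarging the budget never renders a previously optimal allocation infeasible, which is immediate from $B'<B_{\mathrm{tot}}$. The (possibly discrete or finite) nature of $\mathbb{A}$ is harmless, since the candidate-solution argument holds verbatim for any feasible set.
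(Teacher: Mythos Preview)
Your proposal is correct and follows essentially the same approach as the paper: both argue that with $\Sigma_{\mathrm{or}}$ fixed the optimal weights coincide, so only $\CA$ differs, and the feasible set of allocations under $B'$ is nested inside that under $B_{\mathrm{tot}}$, whence the minimum variance cannot increase. Your write-up is more detailed (explicitly exhibiting $\CA'$ as a feasible candidate for the larger-budget problem), but the underlying idea is identical.
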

\begin{proof}
    With $\Sigma_{\mathrm{or}}$ fixed, both estimators use the same 
    $\alpha$ from \eqref{eqn:alpha-star-acv}, differing only in $\CA$ from \eqref{eqn:mxmcpy}--\eqref{eqn:mxmcpy_constraint}. The feasible domain for $\CA$ under $B_{\mathrm{tot}}$ contains that under $B'$, so the optimal variance for $B_{\mathrm{tot}}$ is no greater than that for $B'$, and hence $\CL_{C}(B', \Sigma_{\mathrm{or}}, B_{\mathrm{tot}}) \geq 0$.
\end{proof}

\paragraph{Posterior-expected components}
Exact evaluation of $\CL_{A}$ and $\CL_{C}$ requires $\Sigma_{\mathrm{or}}$, which is unknown. We therefore take the expectations over the posterior $p(\Sigma|D)$:
\begin{align}\label{eqn:expected_accuracy_loss}
    \overline{\CL}_{A}(\Sigma', B', p_{\boldsymbol{\Sigma} | D}) &= \EE_{\boldsymbol{\Sigma} | D} \left[ \CL_{A}(\Sigma', B', \boldsymbol{\Sigma}) \right] \\&= \EE_{\boldsymbol{\Sigma} | D} \left[ \left(\Var[\boldsymbol{\Tilde{Q}}(\boldsymbol{z}; \zeta_{\Sigma', B'})] \Big| \boldsymbol{\Sigma} \right) - \left( \Var[\boldsymbol{\Tilde{Q}}(\boldsymbol{z}; \zeta_{\Sigma, B'})] \Big| \boldsymbol{\Sigma} \right) \right], \nonumber\\
\label{eqn:expected_cost_loss}
    \overline{\CL}_{C}(B', p_{\boldsymbol{\Sigma} | D}, B_{\mathrm{tot}}) &= \EE_{\boldsymbol{\Sigma} | D} \left[ \CL_{C}(B', \boldsymbol{\Sigma}, B_{\mathrm{tot}}) \right] \\&= \EE_{\boldsymbol{\Sigma} | D} \left[ \left( \Var[\boldsymbol{\Tilde{Q}}(\boldsymbol{z}; \zeta_{\Sigma, B'})] \Big| \boldsymbol{\Sigma} \right) - \left( \Var[\boldsymbol{\Tilde{Q}}(\boldsymbol{z}; \zeta_{\Sigma, B_{\mathrm{tot}}})] \Big| \boldsymbol{\Sigma} \right) \right]. \nonumber
\end{align}
Each term can be approximated via MC in a similar manner as \eqref{eqn:mc}. 
$\overline{\CL}_{A}$ can be interpreted as an \textit{exploitation loss}: it generally decreases with more pilot samples as $\Sigma'$ approaches $\Sigma_{\mathrm{or}}$. $\overline{\CL}_{A}$ can be interpreted as an \textit{exploration loss}: it generally increases with more pilot samples because less budget remains for the final ACV estimation. This decomposition quantifies the trade-off between exploration (pilot sampling) and exploitation (final ACV estimation), providing a principled basis for adaptive stopping.

\subsection{Projecting covariance posteriors}\label{sec:projected-posteriors}

To approximate the expected loss for future iterations of pilot sampling, we need to project the posterior distribution that would result after collecting additional, as-yet-unobserved data. 

Let $D$ denote the current set of pilot samples and $p(\Sigma | D)$ the corresponding posterior over the covariance matrix. We write $p(\Sigma | D_{n})$ for the \textit{projected posterior} after $n$ additional pilot-sampling iterations. Since $D_{n} \notin D$ is unknown, this projection is necessarily unknown. 
When the prior-likelihood pair is conjugate, such as with the IW or the $\gamma$-Gaussian priors introduced in Section~\ref{sec:bayes_inf}, these projections are straightforward and computationally inexpensive. 
In practice, when projecting only a few pilot sampling iterations ahead, the resulting approximate reduction in posterior uncertainty is sufficiently accurate to guide adaptive stopping decisions.

\subsubsection{Inverse Wishart projected posteriors}
We consider an IW posterior parameterized by $H=H_0+S\times N_{\mathrm{pilot}}$, $\nu=\nu_0+N_{\mathrm{pilot}}$, where $H$ is the updated scatter matrix (posterior ``center''), $\nu$ encodes the 
total number of pseudo-pilot samples from both the prior and observed data, and $S$ is the sample scatter matrix from the observed pilot data. 

If we assume that $n$ additional pilot samples will have the same scatter matrix $S$ as the existing data, the conjugate update formulas for the projected posterior become:
\begin{align}
    \nu_n &= \nu_0 + N_{\mathrm{pilot}} + n,\\
    H_n &= H_0 + S\times(N_{\mathrm{pilot}} + n),
\end{align}
yielding an analytically updated IW posterior $\mathrm{IW}(\nu_n,H_n)$ without requiring actual new samples.

\subsubsection{\texorpdfstring{$\boldsymbol{\gamma}$}{TEXT}-Gaussian projected posteriors}
For the $\gamma$-Gaussian prior, we project posteriors using the moment-matching procedure from Section~\ref{ss:gamma-inference} but artificially tighten the likelihood parameters in \eqref{eqn:gamma_likelihood_terms} to reflect the assumed $n$ extra samples. We assume the unseen data have the same sample statistics as the observed data.
Concretely, we update the Wishart distribution for generating simulated covariance matrices in \eqref{e:Wishart} to $\mathrm{Wishart}\left(\frac{1}{N_{\mathrm{pilot}}+n}D\Gamma^{-1}(\gamma)D,N_{\mathrm{pilot}}+n\right)$.

\subsection{Hyperparameters for final ACV estimation} \label{sec:cov_point_estimate}

In the next section, we will address pilot-sampling termination. Once pilot sampling is complete, the remaining budget is allocated entirely to the final ACV estimation, which requires selecting the final ACV hyperparameters. 
Given the posterior distribution $p(\Sigma|D)$ over the covariance, one could, in principle, draw covariance samples from the posterior, optimize $\zeta$ for each sample, and average the resulting ACV estimates, yielding a posterior-expected ACV estimate. However, this would require multiple ACV evaluations---each with distinct allocations---and is computationally prohibitive. Instead, we perform a single final ACV estimation using a point estimate of the covariance. 

We recommend using the posterior mean (or mode) of $\boldsymbol{\Sigma}$, which incorporates prior information and thus regularizes the estimate. This is particularly valuable in low-sample regimes or when informative priors are available. This approach contrasts with the conventional estimate based on the unbiased sample covariance in \eqref{eqn:cov}, which can be highly variable when pilot data are scarce.

In our numerical examples, we adopt the posterior mean.
For the IW posterior, the mean is given in \eqref{eqn:iw_mean}. For the $\gamma$-Gaussian posterior, the mean can be approximated either by MC sampling
or analytically as
$\overline{\Sigma}_{\gamma,\sigma}=
D(\overline{\sigma})\Gamma(\overline{\gamma})D(\overline{\sigma})$, where $\overline{\gamma}$ is the posterior means of $\boldsymbol{\gamma}$, and $\overline{\sigma}$ is obtained by exponentiating the posterior mean of $\log \boldsymbol{\sigma}$.

\subsection{Adaptive pilot-sampling termination}\label{sec:adaptive_algorithm}

We extend the expected loss from \eqref{eqn:expected_loss} to develop an adaptive algorithm that projects the expected loss over future pilot-sampling collections and determines whether to stop pilot sampling based on these projections. 

Let $D$ be the current set of pilot samples, with remaining budget $B'$ (from the total budget $B_{\mathrm{tot}}$), current posterior distribution $p(\Sigma|D)$, and a point estimate $\Sigma'$ for ACV hyperparameters (e.g., the posterior mean; see Section~\ref{sec:cov_point_estimate}). The \textit{projected expected loss} after $n$ additional pilot samples is:
\begin{align}\label{eqn:projected_expected_loss}
    \ExpLoss_{n}\left(\Sigma', B'_{n}, p_{\boldsymbol{\Sigma}| D}, B_{\mathrm{tot}}\right) = \ExpLoss\left(\Sigma', B', p_{\boldsymbol{\Sigma}| D_n}, B_{\mathrm{tot}}\right),
\end{align}
where $B'_{n}$ is the remaining budget after $n$ more pilot samples, and $p_{\boldsymbol{\Sigma}| D_{n}}$ is the projected posterior (see Section~\ref{sec:projected-posteriors}). 

The pilot sampling termination problem based on the current data $D$ involves minimizing the projected expected loss:
\begin{align}\label{eqn:n_star_new}
    \min_{\substack{ B'_{n}\leq B_{\mathrm{tot}} \\ n \, \in \, \mathbb{Z}^{+}}} \,\,\ExpLoss_{n}\left(\Sigma', B'_n, p_{\boldsymbol{\Sigma}| D_n}, B_{\mathrm{tot}}\right),
\end{align}
which balances improved uncertainty in $\Sigma'$ against the reduced budget $B'_{n}$ available for final ACV estimation.
Because posterior projections rely on unseen data, their accuracy decreases as $n$ grows. 
This motivates an \textit{adaptive algorithm} (Algorithm~\ref{alg:pilot_sampling_termination}) 
that re-estimates a solution to \eqref{eqn:n_star_new} as new pilot samples are collected.

\begin{algorithm}[htbp]
\caption{Adaptive pilot-sampling termination}\label{alg:pilot_sampling_termination}
\begin{algorithmic}[1]
\Require Model ensemble $f(Z)$, costs vector $w$, total budget $B_{\mathrm{tot}}$, prior distribution over covariance matrix $p_{\boldsymbol{\Sigma}}$, pilot-sampling batch size $k$, projection steps $n_{\mathrm{steps}}$
\vspace{2mm}
\Ensure Estimate of high-fidelity mean ${\Tilde{Q}}$, samples of estimator errors based on posterior covariance $\{\Var[\boldsymbol{\Tilde{Q}}]\}_{i=1}^{N}$, total number of pilot samples drawn $N^{\ast}_{\mathrm{pilot}}$ 
\vspace{2mm}
\State Initialize $n_{\mathrm{running}} = 0$, $D=\emptyset$, and $B'=B_{\mathrm{tot}}$
\While{$B' > 0$}
    \If{$n_{\mathrm{running}}=0$}
        \State Draw $M+1$ independent samples of $\boldsymbol{Z}$, evaluate models $f(Z)$,
        assign to $D$ 
        \State $n_{\mathrm{running}} \gets M + 1$
    \Else
        \State Draw $k$ independent samples of $\boldsymbol{Z}$ and concatenate $f(Z)$ onto $D$
        \State $n_{\mathrm{running}} \gets n_{\mathrm{running}}+k$
    \EndIf
    \State $B' \gets B_{\mathrm{tot}} - \sum_{m=0}^{M} w_{m} n_{\mathrm{running}}$
    \State $p_{\boldsymbol{\Sigma} | D} \gets$ \texttt{BayesInf}($p_{\boldsymbol{\Sigma}},p_{{D} | \boldsymbol{\Sigma}}, D$) following Section~\ref{sec:bayes_inf}
    \State Select $\Sigma'$ according to posterior samples (e.g., posterior mean), per Section~\ref{sec:cov_point_estimate}
    \State Compute expected loss $L = \overline{\CL}(\Sigma', B', p_{\boldsymbol{\Sigma} | D}, B_{\mathrm{tot}})$ at current step using \eqref{eqn:expected_loss}
    \State Initialize $\underline{L}=\emptyset$
    \For{$i = 1,\ldots,n_{\mathrm{steps}}$}
        \State $K \gets k\times i$
         \State Draw samples from the $K$-sample projected posterior $p(\Sigma | D_{K})$ following Section~\ref{sec:projected-posteriors}
        \State $B_{K}' = B_{\mathrm{tot}} - \sum_{m=0}^{M} w_{m} (n_{\mathrm{running}}+K)$
        \State Compute projected expected loss $L_{K} = \overline{\CL}_{K}(\Sigma', B_{K}', p_{\boldsymbol{\Sigma} | D_K}, B_{\mathrm{tot}})$ in \eqref{eqn:projected_expected_loss}
        \State Concatenate $L_K$ onto $\underline{L}$
    \EndFor
    \If{$L<\mathtt{all}(\underline{L})$}
       \State \textbf{break}
    \EndIf
\EndWhile
\State $N_{\mathrm{pilot}}^{\ast}=n_{\mathrm{running}}$
\State ${\Tilde{Q}} \gets \texttt{ACV}(D, \Sigma', B', w)$ following Section~\ref{sec:background}
\State Draw $N$ MC samples from posterior $p(\Sigma | D)$ and compute $\{\Var[\boldsymbol{\Tilde{Q}}]\}_{i=1}^{N}$ via \eqref{eqn:true_estimator_variance}
\vspace{2mm}
\end{algorithmic}
\end{algorithm}

Our implementation to approximately solve \eqref{eqn:n_star_new} is as follows. Consider the case where each pilot-sampling iteration draws in batches of $k$ samples. The algorithm projects out to a horizon of $n_{\mathrm{steps}}$ such batches.
At each update, projected losses over the 
next $n_{\mathrm{steps}}$ iterations are compared to the current loss. If all projected losses exceed the current loss, pilot sampling terminates.
We recommend the following when setting these algorithm hyperparameters:
\begin{itemize}
\item $n_{\mathrm{steps}}$: Keep small when projected loss curves are smooth (enabling faster computation). Increase when curves are noisy to avoid premature termination.

\item $k$: Use small $k$ (even $k=1$) when budgets are small. Increase when budgets are larger to reduce termination checks.
\end{itemize}

Additionally, we suggest a number of implementation techniques that can improve the algorithm performance.
\begin{itemize}
\item Fix random seed: When sampling from $p(\Sigma| D)$ and $p(\Sigma| D_n)$, fixing the seed improves comparability by eliminating MC variability across loss differences. 

\item Adaptive $N_{\mathrm{MC}}$: 
To better distinguish between the values of $\ExpLoss$ and $\ExpLoss_n$, in the first iteration, compute the standard deviation of the MC estimate of $\ExpLoss$. Adjust $N_{\mathrm{MC}}$ so the estimator's standard deviation is at least one order of magnitude smaller than their difference.

\item Stabilizing $\gamma$-Gaussian projections: For $\gamma$-Gaussian priors, fix the log-standard-deviation means between $p(\Sigma | D)$ and each projected $p(\Sigma | D_n)$. This prevents large mean shifts that could mask the changes in posterior uncertainty.
\end{itemize}
\vspace{10mm}
A further advantage over non-Bayesian approaches is that our method yields posterior predictive samples of the estimator variance, $(\Var[\boldsymbol{\Tilde{Q}}] |\Sigma)$, rather than a point estimate. This enables risk-aware decision making using the full predictive uncertainty and avoids the pitfalls illustrated in the right panel of Figure~\ref{fig:pilot_study}.

\section{Numerical experiments}\label{sec:demos}
We now evaluate the performance of our proposed adaptive pilot-sampling algorithm. 
Our experiments progresses in two stages: first, a simple monomial benchmark problem commonly used in multi-fidelity UQ literature; and second, a Darcy flow PDE where the ensemble is composed of neural network surrogate models. 
These two test cases allow us to assess both controlled synthetic settings and more realistic multi-fidelity scenarios. 

Across all experiments, we compare our method against several baselines and oracle benchmarks:
\\
\begin{center}
\begin{tabular}{@{} l p{0.7\linewidth} @{}}
\textbf{Acronym} & \textbf{Description} \\
\hline

MC & Single-fidelity MC using only the highest-fidelity model. \vspace{2mm} \\

MLMC-BEST & Best-case MLMC estimator from \cite{giles_multilevel_2015}, with oracle covariance and the full budget dedicated to MLMC evaluation. This represents an oracle baseline optimal in budget and covariance knowledge, but limitated by how $\alpha$ and $\CA$ are chosen. \vspace{2mm} \\

ACV-BEST & Best-case ACV estimator, with oracle covariance and the full budget dedicated to ACV evaluation. \vspace{2mm} \\

ADAPT-IW & Proposed adaptive pilot-sampling method (Algorithm~\ref{alg:pilot_sampling_termination}) using the IW prior and associated inference. \vspace{2mm} \\

ADAPT-GG & Proposed adaptive pilot-sampling method (Algorithm~\ref{alg:pilot_sampling_termination}) using the $\gamma$-Gaussian prior and associated inference but with diagonal prior covariance, i.e., each $\gamma$ parameter is independent. 
\vspace{2mm} \\

ADAPT-GGMVN & Proposed adaptive pilot-sampling method (Algorithm~\ref{alg:pilot_sampling_termination}) using the $\gamma$-Gaussian prior and associated inference but with full prior covariance. 
\end{tabular}
\end{center}

\subsection{Case 1: Monomial benchmark}\label{sec:monomial}

\subsubsection{Problem setup}
\label{sss:c1_setup}

We begin with a simple multi-fidelity benchmark problem in which models are defined by monomials of decreasing order: 
\begin{align}
f_m(\boldsymbol{z})= \boldsymbol{z}^{5-m}, \qquad m=0,\ldots,M-1, \qquad \boldsymbol{z} \sim \mathcal{U}(0,1).
\end{align}
We take $M=4$, with a high-fidelity model ($f_0$, polynomial degree $5$) and 
three lower-fidelity approximations ($f_1,f_2,f_3$). The cost vector is $w = \{10^{-m}\}_{m=0}^{M-1}$. 
This benchmark is widely used in the multi-fidelity UQ literature, and here we use it to investigate how total budget, prior family, and prior informativeness affect the proposed adaptive pilot-sampling termination.

As a baseline, we specify an uninformative prior.
For IW (ADAPT-IW), we assume the user believes the models are highly correlated, with a prior correlation:
\begin{align}
    R_0 &= 
    \begin{bmatrix}
    1     & 0.975 & 0.95  & 0.925 \\
    0.975 & 1     & 0.95  & 0.95  \\
    0.95  & 0.95  & 1     & 0.95  \\
    0.925 & 0.95  & 0.95  & 1     
    \end{bmatrix},\label{eqn:prior_mean_corr}
\end{align}
and prior standard deviations
\begin{align}
    {\sigma}_0=\left[ 0.1, 0.1, 0.1, 0.1 \right]^{\top},
\end{align}
yielding a prior covariance 
$\Sigma_0 = \mathrm{diag}({\sigma}_0) \, R_0 \, \mathrm{diag}({\sigma}_0)$.
To represent relatively weak prior certainty, we set $\nu_0=6$. Using \eqref{eqn:iw_mean}, the scale parameter is then $H_0  =  \Sigma_0 (\nu_0 - M - 1)$, giving the overall prior distribution $\mathrm{IW}(H_0,\nu_0)$.

For the $\gamma$-Gaussian cases (ADAPT-GG, ADAPT-GGMVN),
we consider two versions: 
\begin{itemize}
\item ADAPT-GG: diagonal prior covariance (i.e., independent normals on each $\boldsymbol{\gamma}$); and
\item ADAPT-GGMVN: full prior covariance for $\boldsymbol{\gamma}$.
\end{itemize}
For both, we simply map $R_0$ from \eqref{eqn:prior_mean_corr} to the $\gamma$ space to establish the prior mean: 
\begin{align}
\mu_{\gamma,0} = [1.5883, 1.14386  , 0.6994, 0.9291, 1.1858, 1.2053]^{\top}.
\end{align}
For ADAPT-GGMVN, the prior for $\boldsymbol{\gamma}$ becomes $\boldsymbol{\gamma}\sim\mathcal{N}(\mu_{\gamma,0},\mathrm{diag}([1,1,1,1])$.
For ADAPT-GG, each prior component is $\boldsymbol{\gamma}_i\sim\mathcal{N}(\mu_{\gamma,0,i},1)$ for $i=0,\ldots,5$. The standard deviation prior in both cases is $\log(\boldsymbol{\sigma}_{m})\sim \mathcal{N}(0.1,1)$ for $m=0,\ldots,3$.

\subsubsection{Results}
We first consider a single run of the ADAPT-GGMVN method with a total budget equivalent to $200$ pilot samples, i.e., $B_\mathrm{tot}=200 \sum_m w_m$. 
With 6 initial pilot samples, we project the expected loss for various horizons, decomposing it into its accuracy and cost components. 
Figure~\ref{fig:monomial_loss_decomp} shows that 
for very small sample sizes, accuracy loss dominates the total loss. As the total sample size increases, cost loss quickly becomes dominant. 
Interestingly, once the total budget is nearly exhausted, accuracy loss begins to rise again, reflecting that even small discrepancies between $\Sigma'$ and $\Sigma_\mathrm{or}$ can yield larger differences in estimator variance when the ACV budget remaining after pilot sampling is limited. Nevertheless, on the log-scale axis, accuracy loss remains a minor contributor overall.

\begin{figure}[htbp]
    \centering
    \includegraphics[width=0.75\linewidth]{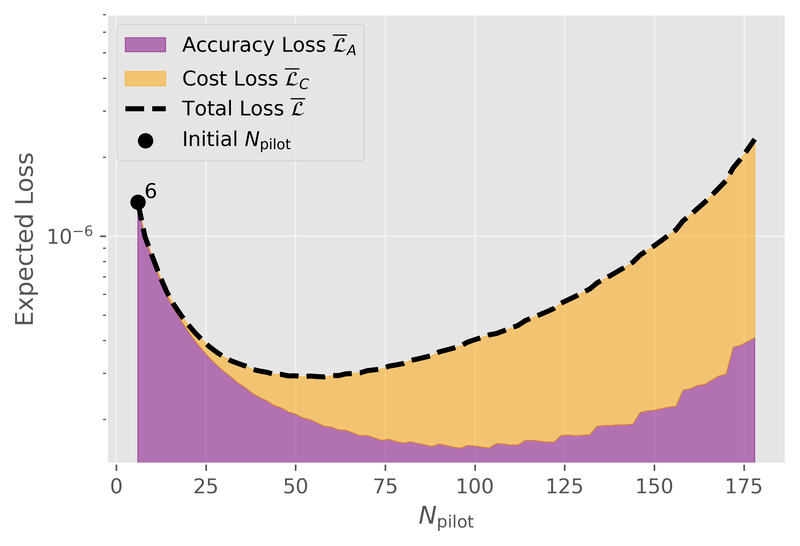}
    \caption{Decomposition of the projected expected loss into \textbf{\textcolor{Fuchsia}{accuracy loss}} and \textbf{\textcolor{DarkGoldenrod3}{cost loss}} components for the ADAPT-GGMVN method. Projections are shown from an initial $N_\mathrm{pilot}=6$ across varying sample sizes, using a single random seed and total budget equivalent to $200$ pilot samples. }
    \label{fig:monomial_loss_decomp}
\end{figure}

As additional pilot samples are drawn in batches, the projected losses are re-estimated and the termination criterion in Algorithm~\ref{alg:pilot_sampling_termination} is re-evaluated. Figure~\ref{fig:monomial_curves} illustrates these projections, with dashed lines indicating the projected loss from different starting sample sizes. While the projections do not exactly match the true loss curve, they are reasonably accurate over short horizons. Since the adaptive algorithm only checks whether the expected loss is projected to increase after the next $n_{\mathrm{steps}}$ batches, the method remains reliable provided $k$
is 
not too large. The histograms in Figure~\ref{fig:monomial_curves} further illustrate how posterior uncertainty propagates into projected estimator variances, with the distributions concentrating toward the true variance as more pilot samples are collected.

\begin{figure}[htbp]
    \centering
    \includegraphics[width=0.99\linewidth]{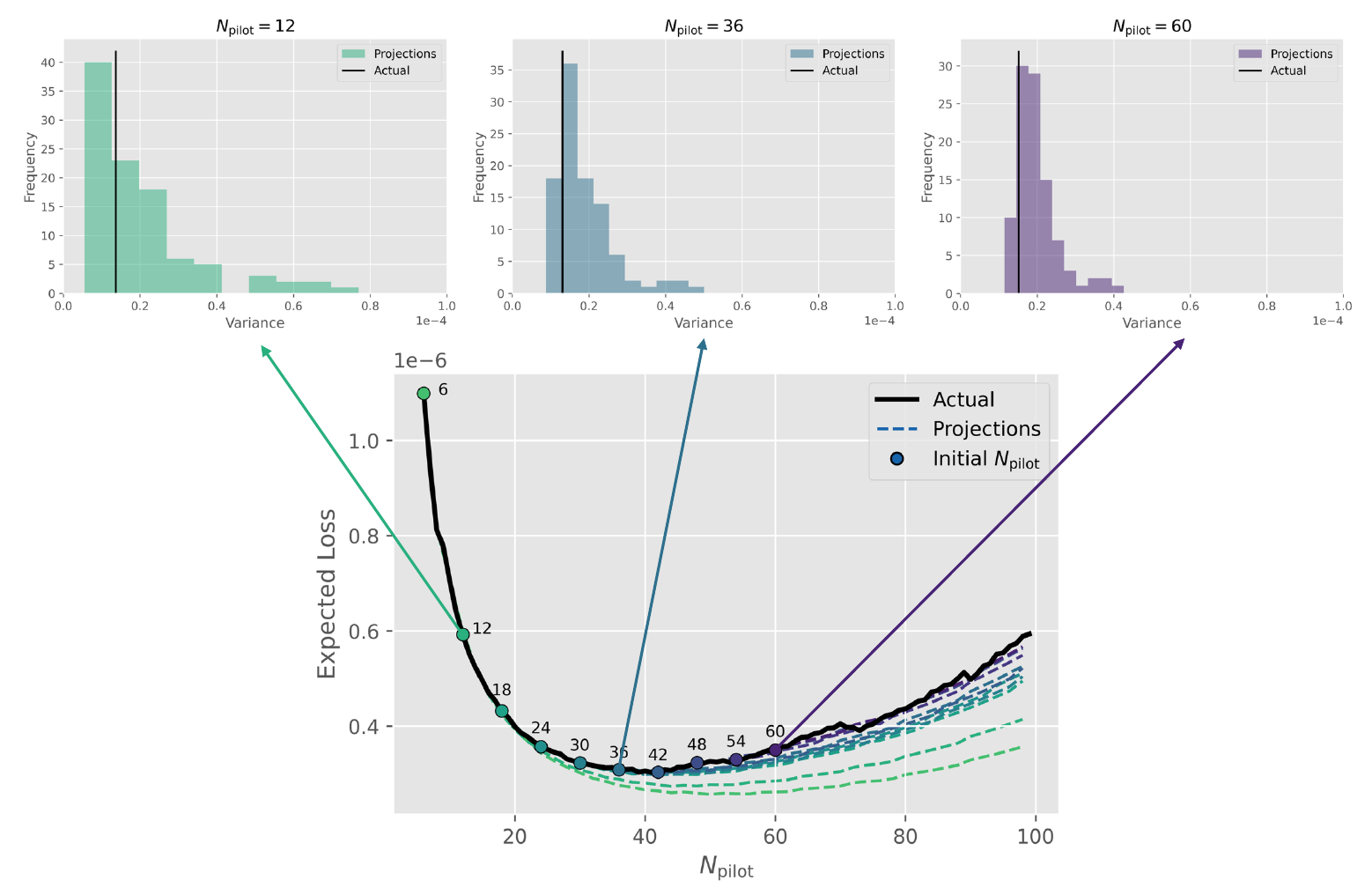}
    \caption{Projected and actual expected loss curves for the ADAPT-GGMVN method under a single random seed with a total budget equivalent to $200$ pilot samples. 
    Histograms of estimated variances are compared against the true variance at $N_\mathrm{pilot}=12$, 36, and 60. As additional samples are collected, the histograms concentrate toward the true value.
    }
    \label{fig:monomial_curves}
\end{figure}

To evaluate the overall algorithm performance and study the effect of the total budget, we repeat the algorithm for each prior across 20 trials with budgets equivalent to 50, 100, and 200 pilot samples. 
Each expected-loss evaluation uses 500 MC samples, the batch size is $k=2$, and we select $n_{\mathrm{steps}}=1$ due to the smoothness in the projected loss curves.
These are extremely limited budgets compared to typical multi-fidelity estimation studies, where substantially pilot sampling is usually recommended. For comparison, we also compute the ACV-BEST and MLMC-BEST estimator variances, both using oracle covariance and dedicating the full budget to final evaluation. 
MLMC-BEST is particularly useful as it shows the variance achievable by a suboptimal estimator (i.e., $\alpha=-1$ fixed for MLMC) under best-case conditions. 
If the adaptive algorithm performs competitively with MLMC-BEST,
it suggests the effect of pilot sampling is comparable to the inefficiency of using the suboptimal MLMC estimator while accessing oracle covariance. 

The results are summarized in Figure~\ref{fig:ADAPT_boxplots_all} 
and Table~\ref{tab:mono_summary}. The table reports the average termination point of the algorithm over 20 trials ($N_{\text{pilot}}^\ast$), the average estimator variance using the remaining budget, and the variance reduction ratio (VRR) statistics. The VRR figures are computed as the ratio of the single-fidelity MC variance to the realized ACV estimator variance after pilot sampling.

These results show that all three adaptive methods perform similarly to MLMC-BEST: slightly worse under very small budgets, but comparable at larger budgets. Notably, ADAPT-IW often performs best, even outperforming MLMC-BEST in some cases. As total budget increases, the negative effect of pilot sampling diminishes since sufficient budget remains for ACV evaluation. 
Also, the variation in the variance reduction is diminished at higher budgets since more pilot samples are afforded and thus the covariance is estimated more accurately.
Comparing across methods, ADAPT-GGMVN and ADAPT-GG generally terminate later than ADAPT-IW, leading to more reliable but less aggressively reduced variances. 
In contrast, ADAPT-IW contracts more quickly, prompting earlier termination and stronger variance reduction but at the expense of reliability.
The difference arises from the modeling assumptions: the $\gamma$-Gaussian prior converges more slowly due to nonlinearities between correlations and $\gamma$ parameters, whereas the IW prior yields faster contraction. 

\begin{figure}[htbp]
    \centering
    \includegraphics[width=0.8\linewidth]{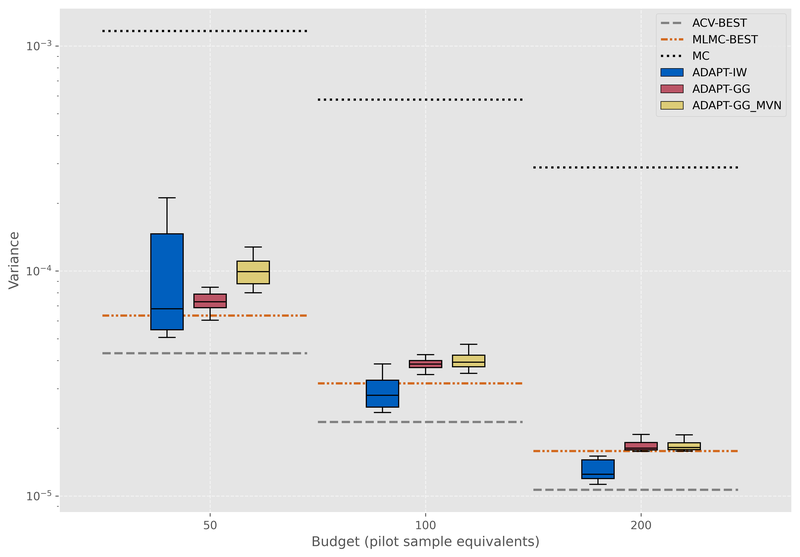}
    \caption{Results of the adaptive pilot sampling algorithm over 20 trials, under budgets equivalent to 50, 100, and 200 pilot samples. Performance is compared against ACV-BEST (gray), MLMC-BEST (orange), MC (black).}
    \label{fig:ADAPT_boxplots_all}
\end{figure}

\begin{table}[htbp]
\centering %
\renewcommand{\arraystretch}{1.1}
\begin{tabular}{|cccccc|}
\hline
\textbf{Method} & \textbf{Budget} & \textbf{\begin{tabular}[c]{@{}c@{}}Average \\ $N_{\mathrm{pilot}}^{\ast}$\end{tabular}} & \textbf{\begin{tabular}[c]{@{}c@{}}Average \\ Estimator \\ Variance\end{tabular}} & \textbf{\begin{tabular}[c]{@{}c@{}} VRR \\ Mean \end{tabular}} & \textbf{\begin{tabular}[c]{@{}c@{}}VRR \\ Standard\\ Deviation\end{tabular}} \\ \hline
\multirow{3}{*}{ACV-BEST} & 50 & - & $4.31 \times 10^{-5}$ & 27.1 & - \\
& 100 & - & $2.13 \times 10^{-5}$ & 27.1 & - \\
& 200 & - & $1.07 \times 10^{-5}$ & 27.0 & - \\ \hline
\multirow{3}{*}{MLMC-BEST} & 50 & - & $6.34 \times 10^{-5}$ & 18.2 & - \\
 & 100 & - & $3.17 \times 10^{-5}$ & 18.2 & - \\
 & 200 & - & $1.58 \times 10^{-5}$ & 18.4 & - \\ \hline
\multirow{3}{*}{ADAPT-IW} & 50 & 8 & $1.03 \times 10^{-4}$ & 14.7 & 6.54 \\
& 100 & 11.1 & $3.67 \times 10^{-5}$ & 18.9 & 5.62 \\
& 200 & 13.5 & $1.51 \times 10^{-5}$ & 21.1 & 4.80 \\ \hline
\multirow{3}{*}{ADAPT-GG} & 50 & 17.7 & $7.74 \times 10^{-5}$ & 15.4 & 2.46 \\
& 100 & 40.8 & $4.06 \times 10^{-5}$ & 14.6 & 1.97 \\
& 200 & 60.6 & $1.67 \times 10^{-5}$ & 17.3 & 0.87 \\ \hline
\multirow{3}{*}{ADAPT-GGMVN} & 50 & 19.4 & $1.04 \times 10^{-4}$ & 11.5 & 2.16 \\
& 100 & 37.6 & $4.16 \times 10^{-5}$ & 14.2 & 1.79 \\
& 200 & 61.3 & $1.67 \times 10^{-5}$ & 17.3 & 0.82 \\ \hline
\end{tabular}
\caption{Summary of adaptive pilot sampling results for the monomial benchmark, across budgets equivalent to 50, 100, and 200 pilot samples (20 trials each). }
\label{tab:mono_summary} 
\end{table}

Finally, we investigate the impact of prior informativeness. We set the budget to equivalent to $200$ pilot samples. 
Since ADAPT-GGMVN and ADAPT-GG behave similarly at this budget, we focus on ADAPT-GGMVN and ADAPT-IW. 
For ADAPT-GGMVN, we compare: 
\begin{itemize}
\item uninformative prior as described in Section~\ref{sss:c1_setup} but with all prior standard deviations inflated by a factor of 10; and

\item informative prior with $\boldsymbol{\gamma} \sim \mathcal{N}\left(\mu_{\gamma_{\mathrm{or}}},\mathrm{diag}([0.1,0.1,0.1,0.1])\right)$ and $\log \boldsymbol{\sigma}_{m} \sim \mathcal{N}(\log(\sigma_{m,\mathrm{or}}),0.1)$ for $m=0,\ldots,3$, where the means are the oracle values. 
\end{itemize}
For ADAPT-IW, we define an informative prior with $H_0$ corresponding to the oracle covariance and $\nu_0=20$, equivalent to 20 pseudo-samples. 
Unlike the $\gamma$-Gaussian case, the IW prior cannot be made less informative than $\nu_0=M+1$ to remain a proper distribution. 

Each configuration is run for $20$ trials (Figure~\ref{fig:boxplots_combined}).
Informative priors significantly improve both performance and reliability, as the posterior contracts with very few pilot samples. 
Thus, where users have accurate prior knowledge (e.g., from legacy runs or domain expertise), encoding this knowledge yields substantial benefits. 
In contrast, with little prior knowledge, more pilot samples are required and estimator performance will be lower. 
Interestingly, reducing informativeness below a certain threshold has little impact: the original and uninformative ADAPT-GGMVN cases produce nearly identical results, indicating that the baseline prior was already weakly informative. 
Overall, the algorithm adapts termination behavior to the informativeness of the prior, so users with strong prior knowledge can save considerable pilot-sampling resources.

\begin{figure}[H]
    \centering
    \subfloat[ADAPT-GGMVN \label{fig:boxplots_ggmvn}]{
        \includegraphics[width=0.45\linewidth]{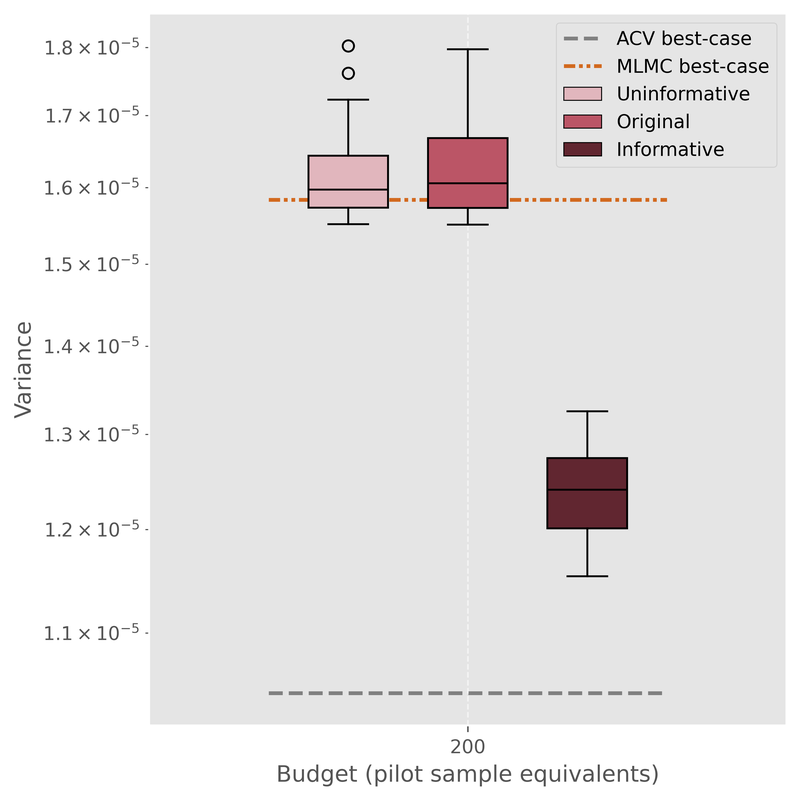}
    }
    \hfill
    \subfloat[ADAPT-IW \label{fig:boxplots_iw}]{
        \includegraphics[width=0.45\linewidth]{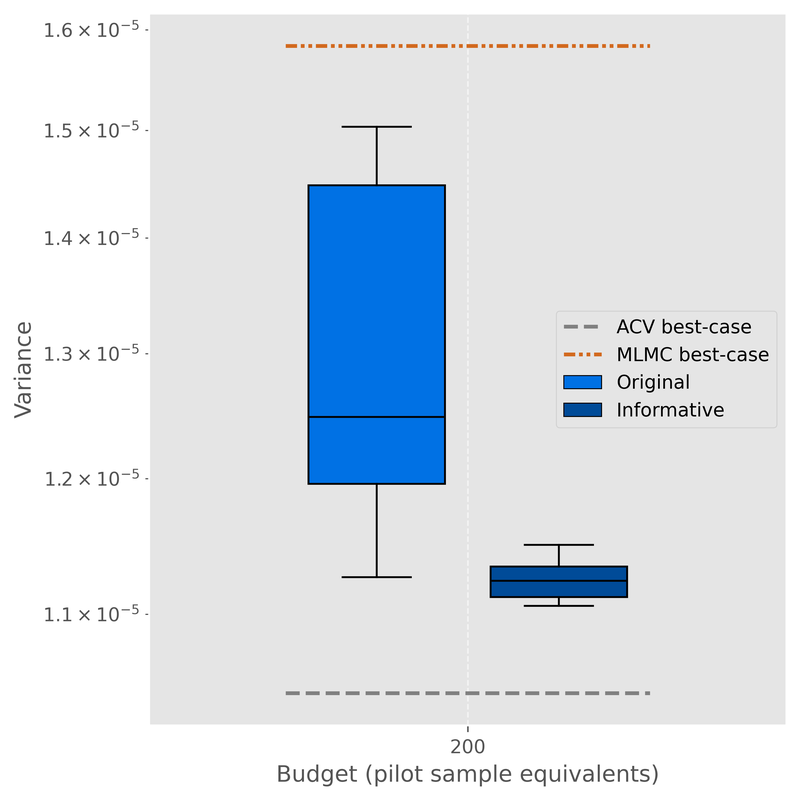}
    }
    \caption{Comparison of the informative and uninformative variants of each prior for a budget equivalent to 200 pilot samples for the monomial model ensemble.}
    \label{fig:boxplots_combined}
\end{figure}

\subsection{Case 2: Darcy flow PDE}

\subsubsection{Problem setup}

We next demonstrate our method on a more complex modeling example: a two-dimensional time-dependent Darcy flow problem, governed by a second-order parabolic PDE describing fluid flow through a porous medium. 
We follow the setup in PDEBench~\cite{takamoto_pdebench_2024}, 
a benchmark suite for scientific machine learning that includes PDE solvers and data-driven modeling tools.

In Darcy flow, the state variable $u(x)$ represents the flow pressure within a permeability field $a(x)$ over a unit-square domain $x\in[0,1]^2$. The dynamics are given by
\begin{equation}
    \partial_t u(x, t) - \nabla\left(a(x)\nabla u(x, t)\right) = g(x), \qquad x \in (0, 1)^2, \quad t>0, \label{eq: darcy_flow_2d_temporal}
\end{equation}
with initial condition $u(x,0)=a(x)$, homogeneous Neumann (zero-flux) boundary conditions, and a constant forcing term $g(x)=1$.

Uncertainty is introduced through the binary permeability field $\boldsymbol{a}(x)$, parameterized by $\boldsymbol{Z}=\{(\boldsymbol{\theta}_{x,i}, \boldsymbol{\theta}_{y,i}, \boldsymbol{\ell}_i)\}_{i=1}^{10}$, 
where each parameter is independently distributed as $\boldsymbol{\theta}_{x,i}\sim\mathcal{U}(0,1)$, $\boldsymbol{\theta}_{y,i}\sim\mathcal{U}(0,1)$, and $\boldsymbol{\ell}_i\sim\mathcal{U}(0,0.5)$. 
Samples of $\boldsymbol{Z}$ define Gaussian radial basis functions (RBFs), which are aggregated and then thresholded by their mean value to form binary maps with values $\{0.1,1\}$, yielding realizations of $\boldsymbol{a}(x)$.
The high dimensionality of $\boldsymbol{Z}$ motivates use of MC-based estimators.
An example permeability field is shown in Figure~\ref{fig:ic_raw_vs_final_darcy}.

\begin{figure}[htbp]
    \centering
    \includegraphics[width=0.75\linewidth]{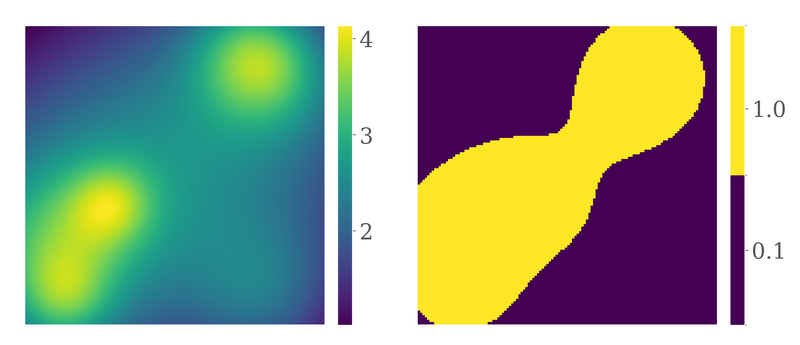}
    \caption{Construction of permeability field $a(x)$ in Darcy flow. The raw field (left) is thresholded at its mean value to produce the final binary map (right).}
    \label{fig:ic_raw_vs_final_darcy}
\end{figure}

Equation~\eqref{eq: darcy_flow_2d_temporal} is discretized using a second-order centered-difference scheme on a $128 \times 128$ uniform grid and integrated via Heun's method using $\Delta t = 0.03$ 
(set based on satisfying the Courant--Friedrichs--Lewy (CFL) condition for all runs) 
until a final time of $t=2$. At this horizon, the solution is effectively at steady state~\cite{takamoto_pdebench_2024}.
Representative realizations of $\boldsymbol{a}(x)$ and their corresponding pressure fields are shown in Figure~\ref{fig:ic_sol_darcy_10coords}, illustrating the strong dependence of the flow on the geometry of the permeability. 

\begin{figure}[htbp]
    \centering
    \includegraphics[width=0.95\linewidth]{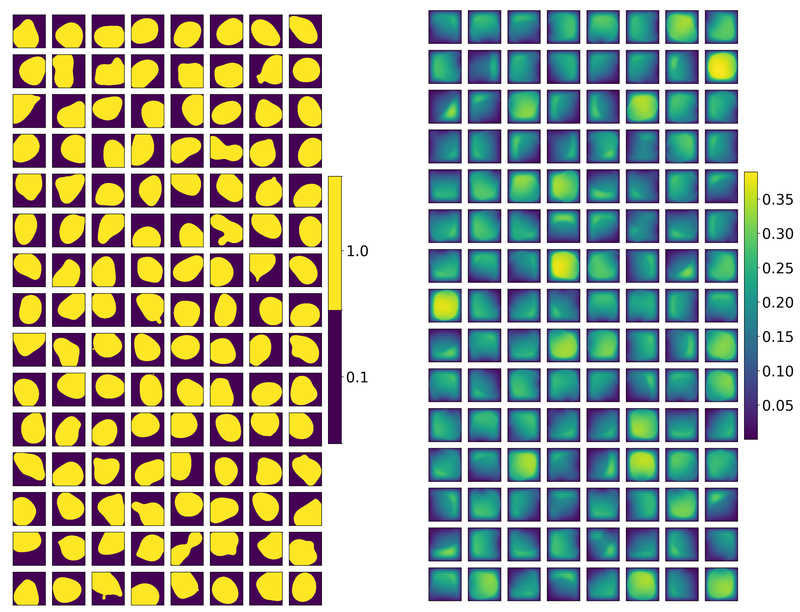}
    \caption{Example realizations of the binary permeability field (left) and their corresponding flow pressure solutions at $t=2$ (right). A total of 120 samples are generated for training the lower-fidelity models.}
    \label{fig:ic_sol_darcy_10coords}
\end{figure}

Finally, we define a scalar quantity of interest (QoI) to be the {relative} mean flow pressure at $t=2$ within a bounding box spanning $(0.6, 0.75)$ to $(0.8, 0.95)$ in the domain:
\begin{align}
\boldsymbol{Q} = f_0(\boldsymbol{Z})=\boldsymbol{\bar{u}}_{\mathrm{box}}(\boldsymbol{Z}, t=2) - \boldsymbol{\bar{u}}(\boldsymbol{Z}, t=2),
\end{align}
where $\boldsymbol{\bar{u}}_{\mathrm{box}}$ is the mean pressure within the box and $\boldsymbol{\bar{u}}$ is the mean pressure over the entire domain. The bounding box is illustrated in Figure~\ref{fig:sol_comparison_pde_unet_fno}.

\subsubsection{Model ensemble}

We employ two neural-network-based surrogates, also benchmarked in  \cite{takamoto_pdebench_2024}, as low-fidelity approximations of the high-fidelity PDE solver ($f_0$):
a U-net convolutional neural network ($f_1$), and a Fourier Neural Operator (FNO) ($f_2$). Both models are trained using high-fidelity model data (Figure~\ref{fig:ic_sol_darcy_10coords}), 
taking permeability fields $a(x)$ as input and predicting flow pressure fields $u(x,t=2)$.

The U-net ($f_1$) is a fully convolutional encoder-decoder network originally designed for biomedical image segmentation~\cite{ronneberger_u-net_2015}.
The encoder contracts the inputs $a(x)$ into a latent bottleneck representation, and the decoder progressively upsamples while incorporating skip connections from the encoding layers.  
This allows fine-scale features to be preserved during reconstruction, making U-nets well-suited for PDE input-output mappings.

The FNO ($f_2$), in contrast, first lifts the input $a(x)$ into a higher-dimensional representation 
through a shallow fully connected layer. 
A sequence of updates is then applied via compositions of a kernel integral operator and nonlinear activations, 
followed by projection to the original space via additional fully connected layers.
Assuming a stationary kernel, the convolutional theorem enables efficient parameterization in the Fourier space. 
The uniform discretization of the domain allows use of fast Fourier transform (FFT), and coupled with truncation of the Fourier series, this reduces the computational complexity for the forward pass. 
The neural operator framework, described in greater detail in \cite{Li2020}, has been successfully extended with physical priors (e.g., spherical Fourier transforms) for applications such as weather forecasting~\cite{bonev_spherical_2023,kurth_fourcastnet_2023}.

While deep neural networks typically require large training datasets, multi-fidelity estimation has more modest requirements: low-fidelity models need only to be sufficiently \textit{correlated} with high-fidelity outputs in the region of interest. 
Thus, even models trained on smaller datasets can provide substantial variance-reduction over single-fidelity MC.
In this case, both low-fidelity models are trained on 120 pairs of $(a(x), u(x,t=2))$ from Figure~\ref{fig:ic_sol_darcy_10coords}.  
The training loss is the mean absolute error (MAE) between predicted and ground-truth solution. 
Learning is carried out for 40 epochs, with the learning rate halving every 5 epochs.

To assess accuracy, the trained models are tested on 30 additional high-fidelity samples, with representative comparisons shown in Figure~\ref{fig:sol_comparison_pde_unet_fno}.
The U-net exhibits minor artifacts near permeability discontinuities but achieves better correlation with the high-fidelity QoI. 
The FNO, despite having an order of magnitude fewer parameters than the U-net, produces smoother fields but tends to under-predict the QoI. 
Timing and accuracy results are summarized in Table~\ref{tab: surrogate_summary}. 

\begin{figure}[htbp]
    \centering
    \includegraphics[width=0.8\linewidth]{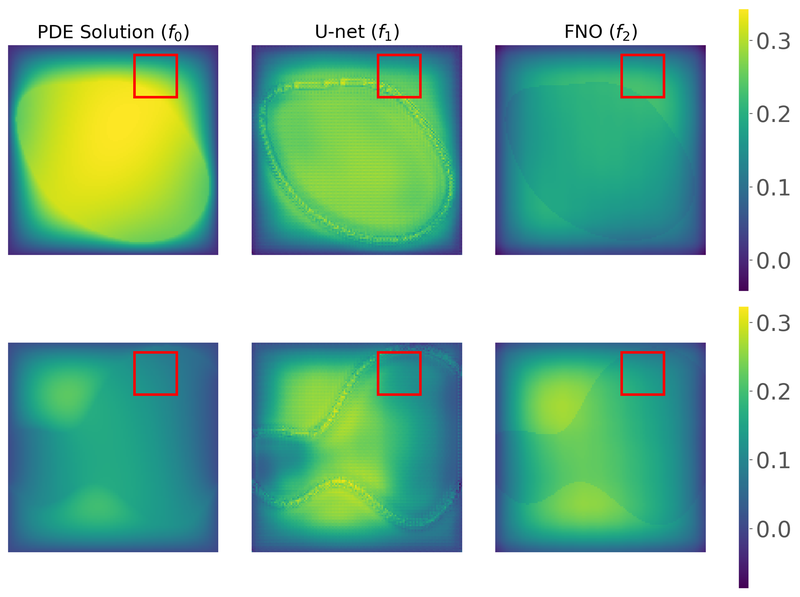}
    \caption{Flow pressure fields for two realization of $\boldsymbol{Z}$. In each row, solution generated by the high-fidelity solver (left) are compared with predictions from the trained low-fidelity models (middle and right). The red bounding box marks the region used to compute the scalar QoI, $Q_m=f_m(Z)$.}
    \label{fig:sol_comparison_pde_unet_fno}
\end{figure}

\begin{table}[htbp]
\centering
\begin{tabular}{ccc}
\toprule
\textbf{Model} & \textbf{Parameters} 
& \textbf{Batch run time [s]} \\ \hline
PDE solver ($f_0$)     & -                   
& 508                   \\ 
U-net ($f_1$)          & 7,762,465             
& 3.0                     \\
FNO ($f_2$)           & 465,377              
& 2.2                     \\ \bottomrule
\end{tabular}
\caption{Model sizes and run times for a batch of 30 test samples.}\label{tab: surrogate_summary}
\end{table}

Based on the observed runtimes, we normalize the cost vector as $w=[1, 0.006, 0.004]$. Training costs are not included since the low-fidelity models are fixed post-training and treated as black-box models during pilot sampling.
Future work will investigate trade-offs between allocating budget for model training versus pilot sampling for covariance estimation.

We specify the following relatively informative prior baseline.  For IW (ADAPT-IW), the prior correlation is
\begin{align}
    R_0 &= 
    \begin{bmatrix}
    1     & 0.8 & 0.7 \\
    0.8  & 1  & 0.7  \\
    0.7 & 0.7  & 1 \\     
    \end{bmatrix},\label{eqn:prior_mean_corr2}
\end{align}
with prior standard deviations
\begin{align}
    {\sigma}_0=\left[ 0.05, 0.05, 0.05 \right]^{\top},
\end{align}
yielding the prior covariance $\Sigma_0 = \mathrm{diag}({\sigma}_0) \, R_0 \, \mathrm{diag}({\sigma}_0)$. 
We assign a stronger prior correlation (0.8) between the high-fidelity model and U-net than between the high-fidelity model and FNO (0.7), consistent with their observed prediction quality.
We also set $\nu_0 = 8$ and define $H_0$ using \eqref{eqn:iw_mean}, giving the prior distribution $\mathrm{IW}(H_0,\nu_0)$.

As in the monomial case, for $\gamma$-Gaussian (ADAPT-GG and ADAPT-GGMVN), we map $R_0$ into the $\gamma$ space:
\begin{align}
\mu_{\gamma,0} = [0.9429, 0.6573, 0.6573]^{\top},\label{eq:mu_gamma_darcy}
\end{align}
with unit variances. The prior for the standard deviation is specified as $\log(\boldsymbol{\sigma}_{m})\sim \mathcal{N}(0.05,1)$ for $m=0,\ldots,2$ .

\subsubsection{Results}

For the Darcy flow problem, we set a total budget of $B_\mathrm{tot}=200 \sum_m w_m$, with 4 initial pilot samples, batch size $k=2$, and projection horizon $n_{\mathrm{steps}}=6$ for ADAPT-GGMVN, and $n_{\mathrm{steps}}=3$ for ADAPT-IW and ADAPT-GG. Results are reported over 5 trials, each using 800 MC samples to evaluate expected and projected losses. 

Table~\ref{tab:pdebench_summary} summarizes the outcomes along with VRR statistics relative to the single-fidelity baseline, in the same format as Table~\ref{tab:mono_summary}. 
Unlike the monomial benchmark, repeated high-fidelity evaluations are too costly to obtain exact oracle covariances. 
Instead, oracle correlations are approximated from 200 high-fidelity and low-fidelity model evaluations, while the low-fidelity standard deviations can be refined at very low cost by evaluating an additional 1000 samples.

Figure~\ref{fig:df_curves} shows the projected and actual expected loss curves for the ADAPT-GGMVN method. 
The expected loss plateaus between 10--14 pilot samples, consistent with the estimated average $N_{\mathrm{pilot}}^{\ast}$ across the three adaptive termination methods. 
All methods achieve significant variance reduction despite the limited total budget.
Among the approaches, ADAPT-GG achieves the best performance, with a mean VRR of 4.96. For ADAPT-IW and ADAPT-GGMVN, using the posterior mean of $\boldsymbol{\Sigma}$ as the point estimate to select the ACV hyperparameters improves the average estimator variance by 8\% and 10\% respectively compared to the result based on the unbiased sample covariance.
Overall, in addition to enabling adaptive pilot sampling, Bayesian inference provides a framework to inform the covariance estimation process with prior information from the model training process, leading to well-calibrated ACV hyperparameters.
The relatively small pilot sampling resources required to achieve significant variance reduction is also noteworthy---this has promising implications for modelers who wish to produce data-driven low-fidelity models specifically for the multi-fidelity UQ task but also keeping offline costs for model training and pilot sampling to a minimum.

\begin{figure}[htbp]
    \centering
    \includegraphics[width=\linewidth]{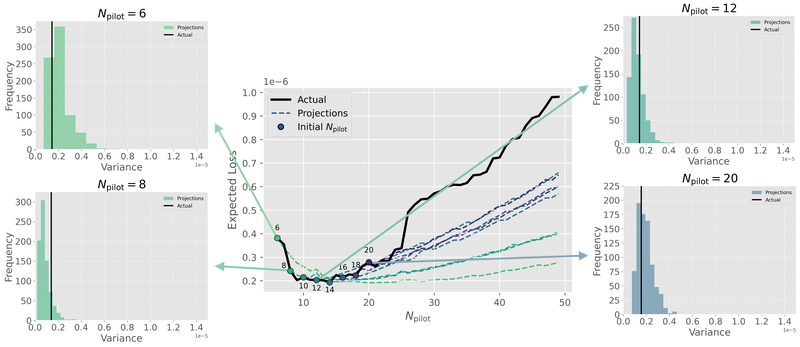}
    \caption{Projected and actual expected loss curves for the ADAPT-GGMVN method under a single random seed with total budget equivalent to $200$ pilot samples. Histograms of estimated variances are compared against the true variance at $N_{\mathrm{pilot}}=6$, 8, 12 and 20. As additional samples are collected, the histograms concentrate toward the true value.
    }
    \label{fig:df_curves}
\end{figure}

\begin{table}[htbp]
\centering
\renewcommand{\arraystretch}{1.1}
\begin{tabular}{|ccccc|}
\hline
\textbf{Method} & \textbf{\begin{tabular}[c]{@{}c@{}} Average \\ $N_{\mathrm{pilot}}^{\ast}$\end{tabular}}  & \textbf{\begin{tabular}[c]{@{}c@{}} Average \\ Estimator \\ Variance \end{tabular}}  & \textbf{\begin{tabular}[c]{@{}c@{}} VRR \\ Mean \end{tabular}} & \textbf{\begin{tabular}[c]{@{}c@{}}VRR \\ Standard\\ Deviation\end{tabular}} \\ \hline
ADAPT-IW & 10.8 & $1.77 \times 10^{-6}$ & 4.83 & 0.49\\
\hline
ADAPT-GG & 12.8 & $1.723 \times 10^{-6}$ & 4.96 & 0.55\\
\hline
ADAPT-GGMVN & 10.4 & $2.15 \times 10^{-6}$ & 4.06 &  0.79\\
\hline
\end{tabular}
\caption{Summary of adaptive pilot sampling results for the Darcy flow problem, across budgets equivalent to 200 pilot samples (5 trials). In contrast, ACV-BEST achieves an estimator variance of $1.3 \times 10^{-6}$ for a VRR of 6.5.}
\label{tab:pdebench_summary} 
\end{table}

\section{Conclusions}
\label{sec:conclusions}

In this work, we advanced sampling-based multi-fidelity UQ by explicitly quantifying uncertainty in the model-output covariance matrix. 
A central insight is that inaccurate empirical covariance estimates---common but largely ignored in existing approaches---can significantly degrade estimator performance.
Unlike traditional methods, we address this issue through probabilistic modeling: treating the covariance matrix as an uncertain parameter and updating beliefs via Bayesian inference conditioned on pilot samples.  
This perspective naturally connects covariance uncertainty to estimator inefficiency, motivating the development of an interpretable expected-loss criterion---built on a new $\gamma$-Gaussian projected-posterior approach---for pilot-sampling termination under finite-sample budgets.

Our results demonstrate that 
Bayesian inference significantly improves multi-fidelity estimation under limited pilot budgets by regularizing noisy covariances estimates and incorporating prior knowledge into the inference process. 
This framework not only supports adaptive pilot-sampling termination but also provides UQ for the estimator variance itself, avoiding misleading overconfidence. 
    The expected-loss criterion further decomposes inefficiency into accuracy and cost components, offering an interpretable balance between exploration and exploitation. 
    Across both benchmark and PDE-based problems, the method performs competitively with baselines that assume oracle covariance knowledge.

Some limitations remain. 
The results depend on modeling assumptions, including priors, likelihoods, and posterior-projection schemes, which require careful specification.
Algorithm hyperparameters such as projection horizon and number of MC samples must also be tuned to avoid premature or delayed termination. 
While this flexibility is ultimately a strength---allowing users to tailor assumptions to problem-specific structure---it does introduce overhead.
In addition, MC-based loss estimation, though embarrassingly parallel, is computationally heavier than closed-form variance expressions and may be less attractive when model evaluations themselves are inexpensive.

Several directions for future work appear promising. Risk-aware termination criteria (e.g., quantiles, conditional value at risk) would better serve applications sensitive to adverse outcomes. 
For ensembles including lower-fidelity models trained from high-fidelity data, objectives that co-allocate resources across training, pilot sampling, and final estimation could improve overall efficiency. 
Finally, developing methods to update covariance beliefs from partial ensemble evaluations, rather than the full set of models at each pilot sample, could yield further computational savings by enabling non-uniform pilot sampling.

\appendix
\section{Bijective Parameterization of Correlation Matrices} 
\label{app: bij_param}

\begin{algorithm}[htbp]
\begin{algorithmic}[1]
\caption{Forward $\gamma$-transform.}\label{alg:gamma-forward}
\Require PSD correlation matrix $R\in\RR^{M\times M}$
\Ensure correlation parameters $\gamma \in \RR^{\frac{M(M-1)}{2}}$
\State 
$R = Q \Lambda Q'$
\State 
$\log R = Q \log (\Lambda) Q'$
\State 
$\gamma = \text{vecl}\left(\log(R)\right)$
\end{algorithmic}
\end{algorithm}

\begin{algorithm}[htbp]
\begin{algorithmic}[1]
\caption{Inverse $\gamma$-transform (fixed-point iteration).}\label{alg:gamma-inverse}
\Require correlation parameters $\gamma \in \RR^{\frac{M(M-1)}{2}}$, 
tolerance for fixed-point iteration $\texttt{tol}$
\Ensure PSD correlation matrix $R\in\RR^{M\times M}$
\State 
$A \gets \mathbf{0}_{n \times n}$
\State Fill lower triangle of $A$ (excluding diagonal) with entries of $\gamma$
\State 
$A \gets A + A^\top$
\State 
$d \gets \sqrt{n}$, $i \gets -1$
\While{$d > (\sqrt{M} \, \texttt{tol})$}
    \State $\Delta \gets \log(\mathrm{diag}(\exp(A)))$
    \State $\mathrm{diag}(A) \gets \mathrm{diag}(A) - \Delta$
    \State $d \gets \| \Delta \|$
    \State $i \gets i+1$
\EndWhile
\State $R \gets \exp(A)$
\State 
$\mathrm{diag}(R) \gets \mathbf{1}$
\end{algorithmic}
\end{algorithm}

\section*{Acknowledgments}

This material is based upon work supported in part by the National Science Foundation Graduate Research Fellowship under Grant No. DGE 1841052, and the Department of Navy award N00014-23-1-2735 issued by the Office of Naval Research.
This research is also supported in part through computational resources and services provided by Advanced Research Computing at the University of Michigan, Ann Arbor.

\bibliographystyle{unsrtnat}
\bibliography{references_manuscript}

\begin{thebibliography}{33}
\providecommand{\natexlab}[1]{#1}
\providecommand{\url}[1]{\texttt{#1}}
\expandafter\ifx\csname urlstyle\endcsname\relax
  \providecommand{\doi}[1]{doi: #1}\else
  \providecommand{\doi}{doi: \begingroup \urlstyle{rm}\Url}\fi

\bibitem[Giles(2008)]{Giles2008}
Michael~B. Giles.
\newblock Multilevel {Monte} {Carlo} path simulation.
\newblock \emph{Operations Research}, 56\penalty0 (3):\penalty0 607--617, 2008.
\newblock ISSN 0030364X.
\newblock \doi{10.1287/opre.1070.0496}.

\bibitem[Peherstorfer et~al.(2016)Peherstorfer, Willcox, and Gunzburger]{peherstorfer_optimal_2016}
Benjamin Peherstorfer, Karen Willcox, and Max Gunzburger.
\newblock Optimal {model} {management} for {multifidelity} {Monte} {Carlo} {estimation}.
\newblock \emph{SIAM Journal on Scientific Computing}, 38\penalty0 (5):\penalty0 A3163--A3194, 2016.
\newblock ISSN 1064-8275, 1095-7197.
\newblock \doi{10.1137/15M1046472}.

\bibitem[Schaden and Ullmann(2020)]{schaden_multilevel_2020}
Daniel Schaden and Elisabeth Ullmann.
\newblock On {multilevel} {best} {linear} {unbiased} {estimators}.
\newblock \emph{SIAM/ASA Journal on Uncertainty Quantification}, 8\penalty0 (2):\penalty0 601--635, 2020.
\newblock ISSN 2166-2525.
\newblock \doi{10.1137/19M1263534}.

\bibitem[Gorodetsky et~al.(2020)Gorodetsky, Geraci, Eldred, and Jakeman]{Gorodetsky2020}
Alex~A. Gorodetsky, Gianluca Geraci, Michael~S. Eldred, and John~D. Jakeman.
\newblock A generalized approximate control variate framework for multifidelity uncertainty quantification.
\newblock \emph{Journal of Computational Physics}, 408, 2020.
\newblock ISSN 10902716.
\newblock \doi{10.1016/j.jcp.2020.109257}.

\bibitem[Bomarito et~al.(2022)Bomarito, Leser, Warner, and Leser]{bomarito_optimization_2022}
G.F. Bomarito, P.E. Leser, J.E. Warner, and W.P. Leser.
\newblock On the optimization of approximate control variates with parametrically defined estimators.
\newblock \emph{Journal of Computational Physics}, 451, 2022.
\newblock ISSN 00219991.
\newblock \doi{10.1016/j.jcp.2021.110882}.

\bibitem[Gorodetsky et~al.(2024)Gorodetsky, Jakeman, and Eldred]{gorodetsky_grouped_2024}
Alex~A. Gorodetsky, John~D. Jakeman, and Michael~S. Eldred.
\newblock Grouped approximate control variate estimators.
\newblock \emph{arXiv preprint}, 2402.14736, 2024.
\newblock \doi{10.48550/arXiv.2402.14736}.

\bibitem[Xu et~al.(2022)Xu, Keshavarzzadeh, Kirby, and Narayan]{xu_bandit-learning_2022}
Yiming Xu, Vahid Keshavarzzadeh, Robert~M. Kirby, and Akil Narayan.
\newblock A {bandit}-{learning} {approach} to {multifidelity} {approximation}.
\newblock \emph{SIAM Journal on Scientific Computing}, 44\penalty0 (1):\penalty0 A150--A175, 2022.
\newblock ISSN 1064-8275, 1095-7197.
\newblock \doi{10.1137/21M1408312}.

\bibitem[Pham and Gorodetsky(2022)]{pham_ensemble_2022}
Trung Pham and Alex~A. Gorodetsky.
\newblock Ensemble approximate control {variate} {estimators}: {Applications} to {multifidelity} {importance} {sampling}.
\newblock \emph{SIAM/ASA Journal on Uncertainty Quantification}, 10\penalty0 (3):\penalty0 1250--1292, 2022.
\newblock ISSN 2166-2525.
\newblock \doi{10.1137/21M1390426}.

\bibitem[Chakroborty et~al.(2024)Chakroborty, Dhulipala, and Shields]{chakroborty_covariance-free_2024}
Promit Chakroborty, Somayajulu L.~N. Dhulipala, and Michael~D. Shields.
\newblock Covariance-free {bi}-fidelity {control} {variates} {importance} {sampling} for {rare} {event} {reliability} {analysis}.
\newblock \emph{arXiv preprint}, 2405.03834, 2024.
\newblock \doi{10.48550/arXiv.2405.03834}.

\bibitem[Archakov and Hansen(2021)]{archakov_new_2021}
Ilya Archakov and Peter~Reinhard Hansen.
\newblock A {new} {parametrization} of {correlation} {matrices}.
\newblock \emph{Econometrica}, 89\penalty0 (4):\penalty0 1699--1715, 2021.
\newblock ISSN 1468-0262.
\newblock \doi{10.3982/ECTA16910}.

\bibitem[Giles(2015)]{giles_multilevel_2015}
Michael~B. Giles.
\newblock Multilevel {Monte} {Carlo} methods.
\newblock \emph{Acta Numerica}, 24:\penalty0 259--328, 2015.
\newblock ISSN 0962-4929, 1474-0508.
\newblock \doi{10.1017/S096249291500001X}.

\bibitem[Bomarito et~al.(2020)Bomarito, Warner, Leser, Leser, and Morrill]{bomarito_multi_2020}
G.F. Bomarito, J.E. Warner, P.E. Leser, W.P. Leser, and Luke Morrill.
\newblock Multi {Model} {Monte} {Carlo} with {Python} ({MXMCPy}).
\newblock Technical report, National Aeronautics and Space Administration, 2020.
\newblock URL \url{https://ntrs.nasa.gov/citations/20200003111}.

\bibitem[Jakeman(2023)]{jakeman_pyapprox_2023}
J.~D. Jakeman.
\newblock {PyApprox}: {A} software package for sensitivity analysis, {Bayesian} inference, optimal experimental design, and multi-fidelity uncertainty quantification and surrogate modeling.
\newblock \emph{Environmental Modelling \& Software}, 170, 2023.
\newblock ISSN 1364-8152.
\newblock \doi{10.1016/j.envsoft.2023.105825}.

\bibitem[Wishart(1928)]{wishart_generalised_1928}
John Wishart.
\newblock The {generalised} {product} {moment} {distribution} in {samples} from a {normal} {multivariate} {population}.
\newblock \emph{Biometrika}, 20A\penalty0 (1-2):\penalty0 32--52, 1928.
\newblock ISSN 0006-3444, 1464-3510.
\newblock \doi{10.1093/biomet/20A.1-2.32}.

\bibitem[Alvarez et~al.(2016)Alvarez, Niemi, and Simpson]{alvarez_bayesian_2016}
Ignacio Alvarez, Jarad Niemi, and Matt Simpson.
\newblock Bayesian inference for a covariance matrix.
\newblock \emph{arXiv preprint}, 1408.4050, 2016.
\newblock \doi{10.48550/arXiv.1408.4050}.

\bibitem[O’Malley and Zaslavsky(2008)]{omalley_domain_level_2008}
A.~James O’Malley and Alan~M. Zaslavsky.
\newblock Domain-{level} {covariance} {analysis} for {multilevel} {survey} {data} {with} {structured} {nonresponse}.
\newblock \emph{Journal of the American Statistical Association}, 103\penalty0 (484):\penalty0 1405--1418, 2008.
\newblock ISSN 0162-1459, 1537-274X.
\newblock \doi{10.1198/016214508000000724}.

\bibitem[Barnard et~al.(2000)Barnard, McCulloch, and Meng]{barnard2000}
John Barnard, Robert McCulloch, and Xiao-Li Meng.
\newblock Modeling covariance matrices in terms of standard deviations and correlations, with application to shrinkage.
\newblock \emph{Statistica Sinica}, 10\penalty0 (4):\penalty0 1281--1311, 2000.
\newblock URL \url{https://www3.stat.sinica.edu.tw/statistica/j10n4/j10n416/j10n416.htm}.

\bibitem[Pinheiro and Bates(1996)]{pinheiro_unconstrained_1996}
José~C. Pinheiro and Douglas~M. Bates.
\newblock Unconstrained parametrizations for variance-covariance matrices.
\newblock \emph{Statistics and Computing}, 6\penalty0 (3):\penalty0 289--296, 1996.
\newblock ISSN 0960-3174, 1573-1375.
\newblock \doi{10.1007/BF00140873}.

\bibitem[Rapisarda et~al.(2007)Rapisarda, Brigo, and Mercurio]{rapisarda_parameterizing_2007}
Francesco Rapisarda, Damiano Brigo, and Fabio Mercurio.
\newblock Parameterizing correlations: A geometric interpretation.
\newblock \emph{IMA Journal of Management Mathematics}, 18\penalty0 (1):\penalty0 55--73, 2007.
\newblock ISSN 1471-678X.
\newblock \doi{10.1093/imaman/dpl010}.

\bibitem[Rebonato and Jaeckel(2011)]{rebonato_most_2011}
Riccardo Rebonato and Peter Jaeckel.
\newblock The {most} {general} {methodology} to {create} a {valid} {correlation} {matrix} for {risk} {management} and {option} {pricing} {purposes}.
\newblock \emph{SSRN preprint}, 1969689, 2011.
\newblock \doi{10.2139/ssrn.1969689}.

\bibitem[Higham(2008)]{higham_11_2008}
Nicholas~J. Higham.
\newblock 11. {Matrix} {logarithm}.
\newblock In \emph{Functions of {Matrices}}, Other {Titles} in {Applied} {Mathematics}, pages 269--286. Society for Industrial and Applied Mathematics, 2008.
\newblock ISBN 9780898716467.
\newblock \doi{10.1137/1.9780898717778.ch11}.

\bibitem[Chiu et~al.(1996)Chiu, Leonard, and Tsui]{chiu_matrix-logarithmic_1996}
Tom Y.~M. Chiu, Tom Leonard, and Kam-Wah Tsui.
\newblock The {matrix}-{logarithmic} {covariance} {model}.
\newblock \emph{Journal of the American Statistical Association}, 91\penalty0 (433):\penalty0 198--210, 1996.
\newblock ISSN 0162-1459.
\newblock \doi{10.1080/01621459.1996.10476677}.

\bibitem[Bucci et~al.(2022)Bucci, Ippoliti, and Valentini]{bucci_comparing_2022}
Andrea Bucci, Luigi Ippoliti, and Pasquale Valentini.
\newblock Comparing unconstrained parametrization methods for return covariance matrix prediction.
\newblock \emph{Statistics and Computing}, 32\penalty0 (5), 2022.
\newblock ISSN 0960-3174, 1573-1375.
\newblock \doi{10.1007/s11222-022-10157-4}.

\bibitem[Berger et~al.(2020)Berger, Sun, and Song]{berger_bayesian_2020}
James~O. Berger, Dongchu Sun, and Chengyuan Song.
\newblock Bayesian analysis of the covariance matrix of a multivariate normal distribution with a new class of priors.
\newblock \emph{The Annals of Statistics}, 48\penalty0 (4):\penalty0 2381--2403, 2020.
\newblock ISSN 0090-5364, 2168-8966.
\newblock \doi{10.1214/19-AOS1891}.

\bibitem[Leonard and Hsu(1992)]{leonard_bayesian_1992}
Tom Leonard and John S.~J. Hsu.
\newblock Bayesian {inference} for a {covariance} {matrix}.
\newblock \emph{The Annals of Statistics}, 20\penalty0 (4), 1992.
\newblock ISSN 0090-5364.
\newblock \doi{10.1214/aos/1176348885}.

\bibitem[Hsu et~al.(2010)Hsu, Sinay, and Hsu]{Hsu_Sinay_Hsu_2010}
Chih-Wen Hsu, Marick~S. Sinay, and John~S. Hsu.
\newblock Bayesian estimation of a covariance matrix with flexible prior specification.
\newblock \emph{Annals of the Institute of Statistical Mathematics}, 64\penalty0 (2):\penalty0 319–342, 2010.
\newblock \doi{10.1007/s10463-010-0314-5}.

\bibitem[Lewandowski et~al.(2009)Lewandowski, Kurowicka, and Joe]{lewandowski_generating_2009}
Daniel Lewandowski, Dorota Kurowicka, and Harry Joe.
\newblock Generating random correlation matrices based on vines and extended onion method.
\newblock \emph{Journal of Multivariate Analysis}, 100\penalty0 (9):\penalty0 1989--2001, 2009.
\newblock ISSN 0047259X.
\newblock \doi{10.1016/j.jmva.2009.04.008}.

\bibitem[Martin(2021)]{martin_informative_2021}
Stephen Martin.
\newblock Informative priors for correlation matrices: {An} easy approach, 2021.
\newblock URL \url{https://srmart.in/informative-priors-for-correlation-matrices-an-easy-approach/}.

\bibitem[Takamoto et~al.(2022)Takamoto, Praditia, Leiteritz, MacKinlay, Alesiani, Pflüger, and Niepert]{takamoto_pdebench_2024}
Makoto Takamoto, Timothy Praditia, Raphael Leiteritz, Daniel MacKinlay, Francesco Alesiani, Dirk Pflüger, and Mathias Niepert.
\newblock {PDEBench}: {An} {extensive} {benchmark} for {scientific} {machine} {learning}.
\newblock \emph{Advances in Neural Information Processing Systems}, 35:\penalty0 1596--1611, 2022.

\bibitem[Ronneberger et~al.(2015)Ronneberger, Fischer, and Brox]{ronneberger_u-net_2015}
Olaf Ronneberger, Philipp Fischer, and Thomas Brox.
\newblock U-{net}: {Convolutional} {networks} for {biomedical} {image} {segmentation}.
\newblock In \emph{Medical {Image} {Computing} and {Computer}-{Assisted} {Intervention} – {MICCAI} 2015}, pages 234--241. Springer International Publishing, 2015.
\newblock ISBN 9783319245744.
\newblock \doi{10.1007/978-3-319-24574-4_28}.

\bibitem[Li et~al.(2020)Li, Kovachki, Azizzadenesheli, Liu, Bhattacharya, Stuart, and Anandkumar]{Li2020}
Zongyi Li, Nikola Kovachki, Kamyar Azizzadenesheli, Burigede Liu, Kaushik Bhattacharya, Andrew Stuart, and Anima Anandkumar.
\newblock Fourier {neural} {operator} for {parametric} {partial} {differential} {equations}.
\newblock \emph{arXiv preprint}, 2010.08895, 2020.
\newblock \doi{10.48550/arXiv.2010.08895}.

\bibitem[Bonev et~al.(2023)Bonev, Kurth, Hundt, Pathak, Baust, Kashinath, and Anandkumar]{bonev_spherical_2023}
Boris Bonev, Thorsten Kurth, Christian Hundt, Jaideep Pathak, Maximilian Baust, Karthik Kashinath, and Anima Anandkumar.
\newblock Spherical {Fourier} {neural} {operators}: {Learning} {stable} {dynamics} on the {sphere}.
\newblock In \emph{Proceedings of the 40th {International} {Conference} on {Machine} {Learning}}, pages 2806--2823. PMLR, 2023.
\newblock URL \url{https://proceedings.mlr.press/v202/bonev23a.html}.

\bibitem[Kurth et~al.(2023)Kurth, Subramanian, Harrington, Pathak, Mardani, Hall, Miele, Kashinath, and Anandkumar]{kurth_fourcastnet_2023}
Thorsten Kurth, Shashank Subramanian, Peter Harrington, Jaideep Pathak, Morteza Mardani, David Hall, Andrea Miele, Karthik Kashinath, and Anima Anandkumar.
\newblock {FourCastNet}: {Accelerating} {global} {high}-{resolution} {weather} {forecasting} {using} {adaptive} {Fourier} {neural} {operators}.
\newblock In \emph{Proceedings of the {Platform} for {Advanced} {Scientific} {Computing} {Conference}}, pages 1--11. Association for Computing Machinery, 2023.
\newblock ISBN 9798400701900.
\newblock \doi{10.1145/3592979.3593412}.

\end{thebibliography}

\end{document}